\newtheorem{remark}{Remark}
\newtheorem{theorem}{Theorem}
\newtheorem{lemma}[theorem]{Lemma}
\begin{document}
\bstctlcite{IEEEexample:BSTcontrol}

\title{Federated Distillation based Indoor Localization for IoT Networks}





\author[1]{Yaya Etiabi}
\author[2,3]{Marwa Chafii}
\author[1]{El Mehdi Amhoud}
\affil[1]{School of Computer Science, Mohammed VI Polytechnic University, Benguerir, Morocco}
\affil[2]{Engineering Division, New York University (NYU) Abu Dhabi, UAE }
\affil[3]{NYU WIRELESS, NYU Tandon School of Engineering, Brooklyn, NY
\authorcr Email: {  \{yaya.etiabi, elmehdi.amhoud\}@um6p.ma, marwa.chafii@nyu.edu} \vspace{-2ex}} 
\maketitle
\begin{abstract}
Federated distillation (FD) paradigm has been recently proposed as a promising alternative to federated learning (FL), especially in wireless sensor networks with limited communication resources. However, all state-of-the-art FD algorithms are designed for classification tasks only and less attention has been given to regression tasks. In this work, we propose an FD framework that properly operates on regression learning problems. Afterwards, we present a use-case implementation by proposing an indoor localization system that shows a good trade-off communication load vs. accuracy compared to FL-based indoor localization. With our proposed framework, we reduce the number of transmitted bits by up to 98\%. We analyze the energy efficiency regarding the number of communication rounds it takes for both FD and FL systems to achieve the same localization accuracy. The results revealed that FD comes with greater energy efficiency by reasonably saving transmission energy at the expense of computation energy. This is a substantial advantage for battery-powered IoT systems with very limited bandwidth. Moreover, we show that the proposed framework is much more scalable than FL, thus more likely to cope with the
expansion of wireless networks.
\end{abstract}

\begin{IEEEkeywords}
Federated distillation, Localization, RSSI Fingerprinting, Internet of Things (IoT), Wireless networks
\end{IEEEkeywords}

\section{Introduction}
\label{sec:introduction}
\IEEEPARstart{L}{ocation} based services
play an important role in several applications by providing targeted information to individuals or entities based on their geographic location in real or near-real time, typically through wireless communication networks. These applications include navigation, individual tracking, emergency services, asset tracking, logistics planning, workforce management, location-based advertising and social networking. 

In recent decades, \ac{LBS} have grown considerably and are now more than ever at the core of the digital revolution we are witnessing. Their market is expected to reach $\$318.64$ billion in $2030$ \cite{web} due to the increase of demand from different sectors such as agriculture, defense, transportation, energy, healthcare, etc. This growth is the direct result of the evolvement of the underlying technologies centered around wireless sensor networks. 

Today with the advent of 5G \cite{li2018}, and the prospect of 6G \cite{jiang2021}, the number of connected devices will grow at an unprecedented rate, resulting in the massive deployment of \ac{IoT} infrastructures \cite{Jouhari2022ASO}. Moreover, 6G comes with new applications such as multisensory \ac{XR} applications, connected robotics and autonomous systems \cite{saad2020}. It is also expected to offer sensing and localization as new services \cite{carlos2021}. All these transformational applications drive the need for accurate localization systems which require lots of resources due to the massive deployment of \ac{IoT} devices. 
Yet in the literature, in addition to the range-free techniques such as centroid method \cite{centroid} and distance vector hop (DV-Hop) technique \cite{dvhop}, typical ranging techniques based on \ac{CSI}\cite{csi} , \ac{AoA} \cite{aoa}, \ac{ToA} \cite{toa}, \ac{TDoA} \cite{tdoa}, and \ac{RSSI} \cite{rssi} using various wireless technologies such as \ac{RFID} \cite{rfid}, \ac{UWB} \cite{uwbtoa}, WiFi \cite{wifi}, LoRaWAN \cite{Etiabi2022SpreadingFA}, and Bluetooth \cite{bluetooth}
have been proposed for indoor positioning. All of these techniques present a number of issues, including low precision, high computational complexity, and unreliability due to wireless channel impairments such as multipath effects caused by \ac{NLOS} propagation in indoor environments, while most positioning devices lack sufficient computing power. Moreover, these techniques require the building of empirical models that capture all the channel effects including noises, multipath fading, channel variations due to the indoor movements, leading to a very high time complexity and  a high dependence to channel conditions.

Consequently, due to the difficulty to derive robust models that capture these indoor channel impairments, researchers turn towards data-based localization using
\ac{ML} which does not require empirical models but rather uses constructed datasets that capture all variations in the indoor environments. Indeed, \ac{ML} is used to tackle the aforementioned limitations of traditional methods
as enlightened in the works in \cite{MLSurvey, uwb, rssilora, csi1,csi2,csi3}.
In these works, we can see that \ac{ML} is a very promising  and game-changing technology for \ac{IoT} localization in the sense that it provides robust and scalable localization systems with  improved accuracy and relatively low complexity\cite{scalableML}.

Nonetheless, ML based localization systems require important data exchange between IoT devices and the central server. To proceed the later, ML operation with a respect of users privacy, \acl{FL}, has been introduced.

 \Acf{FL} has been firstly introduced by Google researchers in \cite{FL}. It is a \ac{ML} paradigm that enables end-devices to collaboratively learn a shared prediction model while keeping all the training data on-device. Such approach decouples the ability to do \ac{ML} from the need to store the data in the cloud. Research works in \cite{tran2019, dinh2021, zhou2021, bennis2021} show the prominence of using such algorithm. Indeed, in \cite{FedLoc2020}, authors have proposed  a \ac{FL}  based localization framework coined \textit{FedLoc}, which richly addresses the privacy concern in collaboration among numerous IoT devices.
 
 However, in an \ac{IoT} network where a massive number of sensors is deployed, \ac{FL} is not enough since when implemented over capacity-constrained  communication  links,  communication  cost  and  latency  may severely  limit  the  performance  of  \ac{FL}. Therefore, \ac{FD} has been proposed in \cite{dist5_2018} to deal with these issues. In fact, \ac{FD} is a compelling distributed learning solution that only exchanges the model outputs whose dimensions are commonly much smaller than the model size. In the literature, \ac{FD} has been used in several applications implying only classification tasks such as handwritten digits recognition \cite{mnist} and image classification \cite{cifar}.
 In this work, we consider a FD based regression problem that predicts the location of an IoT device given its RSSI measurements in that location.
 Unlike the \acl{SoA} FD approaches, our localization problem is a regression one, hence the need to develop a \ac{FD} framework for regression problems.
 
 The motivation of this paper lays in three points: (i) Secure crowdsourcing and decoupling the need to store the RSSI fingerprints data in a fusion center from the ability to train the \ac{DNN} model. Indeed, RSSI fingerprints  are crowdsourced by a pool of IoT devices all over the network. Instead of sending the crowd-sourced data to the fusion center, each device keeps its data locally for privacy preservation and leverage it to train a local model. The global model is then obtained through the federation of the different local models. This is the fundamental of \ac{FL};
 (ii) Reducing as much as possible the communication load involved in the federated training  of the localization model. In fact, the IoT devices namely FL clients are assumed to have limited bandwidth leading to a constrained model parameters exchange. Hence the introduction of FD for only output exchange which is much less bandwidth consuming.
Note that communication is more challenging for IoT devices. 
Therefore, reducing the number of transmitted bits may lead to greater energy efficiency of FD system for the same performance goal as FL;
 (iii) Constructing a FD framework for regression problems for a better implementation of the localization model which is a regression problem by design. In fact, \acl{SoA} FD algorithms do not deal with the regression problems.
 
The main contributions of this work are summarized as follows:
\begin{itemize}
    \item We develop a \ac{FD} framework for regression tasks since the previously proposed \ac{FD} algorithms are dealing with classification problems and no attention has been given to the regression ones. We validate our \ac{FD} framework through performance evaluation using different publicly available experimental datasets.
    \item We propose an IoT localization system based on our proposed \ac{FD} framework. We prove that this localization framework works for indoor systems as well as for outdoor systems. Also to the best of our knowledge, this work is the first to tackle localization problem under a \ac{FD} framework which considerably reduces the communication complexity over bandwidth-constrained wireless networks.
\end{itemize}

The reminder of this paper is organized as follows. Section~\ref{sec:relatedwork} presents related works. In Section~\ref{sec:proposedframework}, we describe our proposed \ac{FD} framework followed by its performance evaluation in Section~\ref{sec:performanceevaluation}.
Finally, Section~\ref{sec:conclusion} concludes our work and provides future research directions.

\section{Related Work}
\label{sec:relatedwork}

In this section, we present the works related to the following: \ac{RSSI} fingerprinting based localization, \ac{FL} based IoT localization frameworks and federated distillation.

\subsection{\ac{RSSI} fingerprinting based localization}
Being the key pillar of \ac{LBS}, localization in IoT networks is a field of study that has started very early with the first
\acp{WSN}. At its early stage, \ac{GNSS} such as the \ac{GPS} were widely used to perform sensor nodes localization with some appropriate algorithms \cite{GPS}. 
With the increase of sensor network density and the expansion of IoT devices’ deployment mainly in indoor environments, GPS solutions become too expensive and suffer from indoor environment’s impairments. To address this issue, many developed localization algorithms do not use GPS directly, but employ it as an assistance in some cases \cite{etiabi2020,etiabi2021} while more efforts are made on the excavation of the \ac{IoT} network itself by exploiting wireless signal characteristics including \ac{CSI}, \ac{AoA}, \ac{ToA}, \ac{TDoA}, \ac{RSSI}. The most commonly used signal, due to its lowest cost and high availability, is the \ac{RSSI} which captures the power-distance relationship of a signal propagation environment. Indeed, fingerprinting based methods  firstly construct a \ac{RSSI} database by extracting  the  representative and distinguishable parameters, namely fingerprint, from the received signals of IoT devices at different locations, then use the constructed database to predict the location corresponding to a new RSSI recording  based on its similarity with the recorded fingerprints.

However \ac{RSSI} measurements are challenged by channel impairments such as the multipath fading caused by the \ac{NLOS} signals leading to high complexity of analytical solutions. These problems are accentuated by the growth of IoT networks which involves a systematic adaptation for localization schemes in order to maintain the mainstream of \ac{LBS}.
Therefore, \ac{ML} has been introduced to maintain a good
trade-off between accuracy and complexity as shown in \cite{Njima2019, dnn_blindLoc,Gogolak2011IndoorFL} where many deep learning frameworks have been proposed for RSSI fingerprinting based localization.


\subsection{FL based IoT localization frameworks}
Although RSSI based fingerprinting has gathered researchers' attention in the past decade and has achieved a great breakthrough, it suffers from an inherent problem: The localization accuracy degrades  abruptly  over  time  due  to  the  very dynamic  environment and  unstable  wireless  devices leading to a high calibration effort for  fingerprint  collection.\\
Thus, by using conventional \ac{ML}, we end up with a huge fingerprint database in the server with sometimes privacy issues. Therefore, employing  \ac{FL}  allows  keeping users private data locally, thus ensuring the privacy preservation and reducing the dataset size while enabling an adequate localization performance.
Consequently, researchers are turning towards the design of localization systems based on federated learning models. Indeed, the work in \cite{FLoc2019}, is the first to propose a federated learning framework called \textit{FLoc}, an RSSI fingerprinting based  indoor  localization scenario to solve the security problem in fingerprint database updating localization. Similarly in \cite{PFL2020, FLLcrowd2020,FedLoc2020}, federated learning is driving the design of localization systems.
Yet, besides the great advantage of privacy preservation and training computation load reduction that FL brings in the cited works, it fails to meet  the bandwidth and energy constraints of the real world IoT systems since the communication load and energy consumption involved in the federated training is not often  suitable for IoT devices with very limited capabilities. Hence, in this work, we overcome these limitations of FL by utilizing federated distillation(FD). The first advantage of the FD lies indeed, in the flexibility it brings to devices to train tailored model architectures. This feature is beneficial for networks with heterogeneous constraints. Additionally, FD comes with augmented robustness in the sens that  adversarial or malicious clients can not directly influence the parameterization of the model during training. However, the most substantial benefits of FD resides in its great communication efficiency compared to FL.


\subsection{Federated distillation}
Federated distillation combines the ideas of knowledge distillation and federated learning. Knowledge distillation \cite{kdist} is a student-teacher approach in which, a student model is helped in its training by a pretrained teacher model in order to transfer the knowledge of the teacher to the student and thus speed up the learning process. It is worth noting that both the student and the teacher have access to the same dataset. A distributed version of this approach namely co-distillation has been proposed in \cite{cdist}, where instead of having a pretrained teacher,  each student considers the aggregated knowledge of the remaining students as its teacher, leading to an online teacher model training alongside students models.
FD can be seen as a derivative of co-distillation as shown in \cite{fdist} with the difference that each FD student possesses its own data as in FL. It is proved therein through asymptotical analysis leveraging the kernel method in \cite{ntk} that the FD can achieve even better performance than knowledge distillation specifically if in the latter, the teacher model is not well pretrained. Therefore, recent works \cite{dist5_2018, dist2_2020,dist3_2020,dist4_2020,dist1_2021} introduces FD to tackle the limitations of FL in terms of bandwidth requirements and energy constraints.

However, to date, all the proposed FD frameworks are targeting only classification tasks and no attention has been given to regression tasks.
Since several real-world ML target variables are continuous and need to be treated as regression problems, we first propose a federated distillation framework for regression. Then, we leverage it to build an indoor localization system.

\section{Proposed Framework}
\label{sec:proposedframework}

\subsection{Localization system model}
In traditional settings, RSSI data from multiple anchor nodes are collected through  a crowdsourcing system and merged in a server  for centralized training of the \ac{DNN} based localization model. In this work, we substitute this by a decentralized approach where collected RSSI data are no longer sent to the server. 
Instead, a pool of IoT devices, namely FD clients is selected to crowdsource RSSI data all over the network. Then, each selected device train a local model with its crowdsourced data.
In FD training setting, FD clients share their knowledge by exchanging their models' outputs so that at the end of the training, the local models predict nearly the same positions given new RSSI measurements.
Thus, for a node to be localized in this network, it records RSSI from in-range anchors and sends them to the nearest FD client which outputs the corresponding position by running its model. Note that the predicted position remains essentially the same as if the prediction was done by any other FD client. Indeed, the models are trained in a manner that provides them a global knowledge of the overall network. 
The FD system architecture and the whole FD training process are described in the next sections.

\subsection{Federated distillation system architecture}
The overall framework contains
three main components namely the parameter server called coordinator also known as teacher, the clients or workers also known as students, and the communication system. 
\subsubsection{Server}
The server (coordinator or teacher) is the entity that supervises the whole \ac{FD} training process. It starts by defining all the common parameters such as the total number of communication rounds, the number of clients to be selected for each communication round, the wireless communication settings.
The server is also in charge of defining the  global \ac{DNN} model. The necessary information to reproduce locally the global model is then broadcasted to the clients by the help of the communication system.
\subsubsection{Clients} The clients (workers or students) are in charge of training their respective local models using their private datasets. During the training, workers periodically upload their local average estimations per each target segment to the parameter server which will aggregate them to obtain a global estimation per each corresponding segment. The global estimations from the server are then used by the clients to update their respective loss functions for the next local training phase as shown in Fig.~\ref{fig:fdr}.
\subsubsection{Communications system} The role of the communication system as its name suggests is to deal with the wireless communication techniques used to communicate the model and the target estimations from the server to clients and vice-versa. This module is further discussed in Section~\ref{sec:com}.
\begin{figure}[!t]
    \centering
    \begin{subfigure}{.5\textwidth}
     \centering
     \includegraphics[scale=0.4]{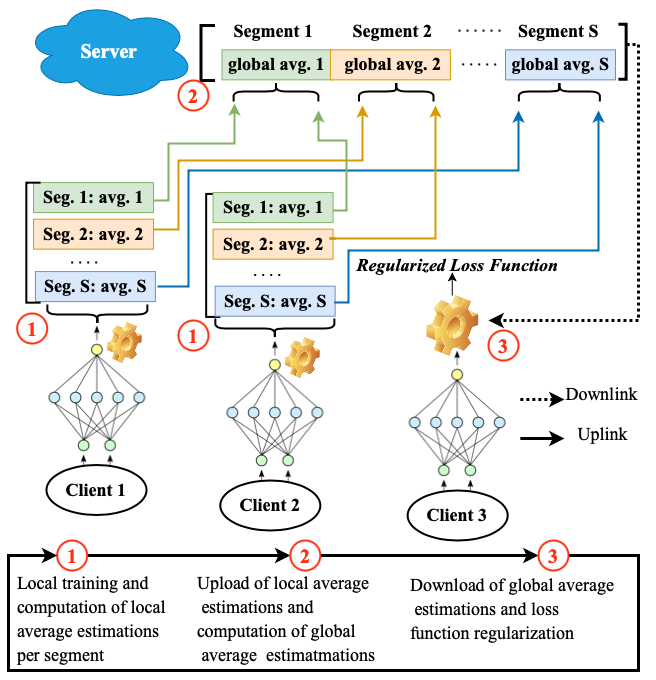}
    \caption{Illustration of segment-based knowledge sharing}
    \label{fig:fdr}
    \end{subfigure}
    \par\bigskip
     \begin{subfigure}{.5\textwidth}
     \centering
     \includegraphics[scale=0.3]{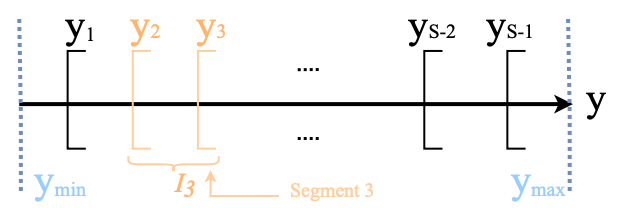}
    \caption{Target variable segmentation}
    \label{fig:seg}
    \end{subfigure}
    \caption{Illustration of FD based Regression training process.}
    \label{fig:fdreg}
\end{figure}
\begin{figure}[!t]
    \centering
    \includegraphics[scale=0.4]{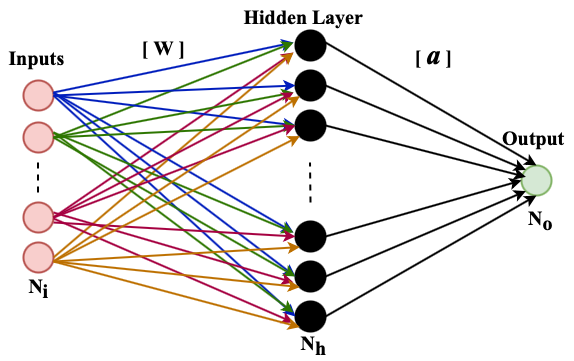}
    \caption{3-Layer Multi-Layer Perceptron Architecture.}
    \label{fig:dnn}
\end{figure}

\subsection{Problem formulation}
The objective of our proposed framework is to cooperatively train a \ac{DNN} using a communication-efficient federated learning scheme namely federated distillation, in a network of IoT devices. 
For the sake of convergence analysis in section~\ref{sec:conv}, we consider a 3-layer neural network comprising an input layer, a hidden layer and output layer with respectively $N_i$, $N_h$, and $N_o$ neurons
as shown in Fig. \ref{fig:dnn}. 
Given an input vector $X_i$, the prediction of the target $y_i $ is given by $\hat y_i = F_{\Theta}(X_i)$, where $F_{\Theta}$ is the function representing our \ac{DNN} and $\Theta = [W\; \alpha]$(see Fig. \ref{fig:dnn}) is the set of the \ac{DNN}'s weights.
Then, for a standalone training, the goal is to minimize the loss function given by: \begin{equation}
    \mathcal{L}(y,\hat y) = \sum_i\mathcal{L}(y_i,\hat y_i),
\end{equation} where 
\begin{equation}
    \hat y_i = F_{\Theta}(X_i) = \frac{1}{\sqrt{N_h}} \sum_{n=1}^{N_h}\alpha_n \ell_n(X_i),
\end{equation} 
with $\ell_n(X_i)= \sigma(W_n^{T} X_i)$ is the logit representing the output of the $n^{th}$ neuron of the last hidden layer, $\sigma(\cdot)$ a non-linear activation function, and $W_n$ the weights vector of the $n^{th}$ neuron. For the localization framework, the input $X_i$ is a vector containing the RSSI measurements from all the access points in the network and the target variable $y_i $ is the 2-D coordinates of the location where these RSSIs have been recorded.
To simplify the notations we will consider $F_{\Theta}(X_i) = F(X_i)$ in the next sections.

In vanilla distillation process so called knowledge distillation, each student independently learns the knowledge of the teacher by adding a regularization term regarding the gap between its prediction and the teacher's. 
Ideally, similar to the distillation in classification context in \cite{cdist}, the distillation for a learning task can be formulated as:
\begin{equation}
    \mathcal{L}(y,\hat y) = \underbrace{\sum_{i} \mathcal{L}\left(y_{i},\hat{y}_{i}\right)}_{\text {loss }}+\lambda \underbrace{\sum_{i}\sum_{n=1}^{N_h}\mathcal{L}\left( \ell_{n}(X_i),L_{n}(X_i)\right) }_{\text {distillation regularizer }},
    \label{equ:kd}
\end{equation}
where $L_{n}(X_i)$ is the pretrained teacher $n^{th}$ output logit given an input $X_i$ and $\lambda$ is the regularization coefficient, assuming that the teacher and the student have access to the same dataset.

On the other hand, \ac{CD} is a distributed version of this system where workers have to distill knowledge from each other. In fact, the main principle of \ac{CD} is to consider a set of predictions from several individual models as the teacher's knowledge, which is frequently more correct than the individual predictions. In this configuration, each student sees the ensemble of remaining students as its teacher.
Consequently, \eqref{equ:kd} is transformed into:
\begin{equation}
    \mathcal{L}(y,\hat y) =  \sum_{k}\sum_{i} \left[ \mathcal{L}\left(y_{i},\hat{y}_{i}\right) +  \lambda \sum_{n}  \mathcal{L}\left(L_{n}^k(X_i),\ell_{n}^k(X_i)\right)\right],
    \label{equ:cd}
\end{equation}
where $L_{n}^k(X_i) = \frac{1}{K-1}\sum_{k\prime \neq k}\ell_{n}^{k\prime}(X_i), \forall k$ with $K$ the number of students. Note that this loss function encompasses all the individual losses for students since this distributed student-teacher configuration creates an interdependence in the distillation process.
However, in a federated learning setting, both KD and CD are not realistic since the data privacy and heterogeneity are the cornerstone of federated learning systems. In the considered IoT network, IoT devices are assumed to possess only their own datasets. 
Consequently, similarly to the technique used to train classification models with FD where the samples are grouped by labels at each device and the mean prediction of each label group is shared with the network, we develop a FD scheme for regression problems. Indeed, for regression problems, since we do not have classes as labels,
the output variable being continuous, we will end up exchanging a vector with equal size to the dataset, which is not only in contradiction with our objective of reducing the number of transmitted bits but also unpractical regarding the \acl{N.I.I.D} aspect of the local datasets with respect to the output variables.
Thus, in order to get closer to the technique used for FD in classification context, under the mild assumption on the boundaries of the target values, the target $y$ is divided into $\mathbf{\mathcal{S}}$ segments as shown in Fig. \ref{fig:seg} so that the average predictions per segment are sent to the server for knowledge sharing (Fig. \ref{fig:fdr}).
Each segment $s$ is defined by a half-opened interval $\mathcal{I}_s $ defined as follows:
\begin{equation}
    \mathcal{I}_s = \left\{\begin{matrix}
 \left]-\infty, y^1 \right [&  if& s=1\\  \\
\left[y^{s-1}, y^{s} \right [&  if& 1<s<\mathcal{S}\\ \\
 \left[y^{s-1}, +\infty \right [&  if& s=\mathcal{S}\\ \\
\end{matrix}\right.,
\end{equation}
$$
\text{where }y^{s} = y_{min} + s\times \epsilon \text{ with } \epsilon=\frac{y_{max}-y_{min}}{\mathcal{S}}
$$
\begin{remark}
The number of segments $\mathcal{S}$ is a hyper-parameter depending on the resolution $\epsilon$ set for the learning task, i.e, regarding a regression problem, the resolution $\epsilon$ can be the maximum tolerable error in estimation. Then $\mathcal{S}$ is given by $$\mathcal{S}= \frac{y_{max}-y_{min}}{\epsilon}.$$
\end{remark}

\begin{remark}
Since the data is distributed across devices, computing the min and the max requires the server to know the boundaries of each local dataset, which may violate the privacy of the federated system. As such, one approach to get this done securely is through a multiparty computation scheme between devices.
However, for the sake of computation and communication efficiency, the server can set them arbitrary or request devices to upload their local min and max using cryptographic tools with order preserving encryption to preserve devices' data privacy.
\end{remark}

For a client $k$ in the network, we define the following settings.
\begin{equation}
\left\{\begin{matrix}
\begin{aligned}
\mathcal{I}_s^k &= \{i\} \text{ such that } y_i\in \mathcal{I}_s \\ & \text{ and } (X_i,y_i)\in \mathcal{D}_k, \text{ with }\left | \mathcal{I}_s^k \right | =  N_s^k \\ 
\bar{\ell}_{n,s}^k &=  \frac{1}{N_s^k } \sum_{i\in \mathcal{I}_s^k} \ell_{n,s}^k(X_i)  \label{eqn} \\
L_{n,s}^k  &=  \frac{1}{K-1 } \sum_{k\prime \neq k} \bar \ell_{n,s}^{k\prime} 
\end{aligned}
\end{matrix}\right.,
\end{equation}
where $\mathcal{I}_s^k$ is the partition of $k^{th}$ client's dataset belonging to the $s^{th}$ segment.  $\ell^k_{n,s}(x)$ is the $n^{th}$ output logit of client $k$ when the input $x$ belongs to segment $s$, $\bar{\ell}_{n,s}^k$ and $L_{n,s}^k$ are respectively the local  and global average of the $n^{th}$ output logit over the samples in segment $s$.

The above formulations leads to the new global loss function of the federated system given by:
\begin{equation}
 \begin{aligned}
    \mathcal{L}(y,\hat y) =& \sum_{k} \sum_{s} \left (\sum_{i\in \mathcal{I}_s^k}\mathcal{L}\left(y_{i},\hat{y}_{i}\right) + \right. \\
    & \left.\lambda \sum_{i\in \mathcal{I}_s^k} \sum_{n} \mathcal{L} \left(L_{n,s}^k,\ell_{n,s}^k(X_i)\right) \right ).
 \end{aligned}
    \label{eq:loss}
\end{equation}

\subsection{FD for regression algorithm }
Considering only the computation aspect of our framework, 
the overall process of solving the problem in \eqref{eq:loss} is described by Algorithm \ref{alg:fdr} which comprises both the client side pseudo-code and the server's.

In fact, in this configuration, in contrast to knowledge distillation, we do not have a pretrained teacher model. So both the student and the teacher are learning during the process especially since the teacher knowledge for a given student is actually the aggregated knowledge of the remaining students. 

As a consequence, at the initial communication round, it is not possible to obtain the regularization term of the loss function since no prediction has been done yet. That is why, as enlightened in \textit{``line 1 - line 3''} of Algorithm \ref{alg:fdr}, we start by training the local models for a few steps.
Until then, the distillation process can start (\textit{``line 6 - line 36''}).

The overall process in Algorithm \ref{alg:fdr} can be summarized in the following steps:
\begin{enumerate}
    \item Each client trains its local model using its private dataset and stores locally the average estimations per segment.
    \item Each client periodically uploads its local average estimations per segment to the server.
    \item The server computes the global average estimations per segment by averaging all the local average estimations per segment sent by all the clients.
    \item Each client downloads the global average estimations per segment and updates its loss function according to \eqref{eq:loss}.
    \item Each client repeats all the above steps until convergence.
\end{enumerate}
\SetKwComment{Comment}{/* }{ */}
\RestyleAlgo{ruled}
\begin{algorithm}
\SetAlgoLined
  \caption{Federated Distillation for Regression}\label{alg:fdr}
  \SetKwInOut{Input}{Inputs}
  \SetKwInOut{Output}{Outputs}
  \Input{$\{\mathcal{D}_k\}$ \Comment*[r]{Client's Datasets}}
  \Input{$\mathcal{L}(\cdot)$ \Comment*[r]{Loss function}}
  
  \Output{$\{\mathcal{W}_k\}$ \Comment*[r]{Clients trained models parameters}}
  \vspace{0.5cm}
  \ForEach{Client $k \in \{1,2, \dots,K\}$}{
     Local training of parameters $\{\mathcal{W}_k\}$ using $\{\mathcal{D}_k\}$
    }
    $r \gets 1$ \;
    get $\{L^k_{s,r}\}$ \;
  \vspace{0.5cm}
  \While{$r < max\_com\_rounds$}{
  \vspace{0.3cm}
  \SetKwProg{ClientLocalTraining}{ClientLocalTraining}{}{}
  \ClientLocalTraining{$(\mathcal{D}_k, \mathcal{W}_k,\{L^k_{s,r}\})$}{
    
     \ForEach{Client $k \in \{1,2, \dots,K\}$}{ 
     \Comment{In parallel}
     \For{$e \gets 1$ \KwTo $\#Epochs$}{%
      \ForEach{batch $b \in \mathcal{D}_k$}{%
        \ForEach{$X_{i,b}, y_{i,b} \in b$}{%
         $\begin{aligned}
         \rightskip=-1in
             &\mathcal{W}_{k} \leftarrow \mathcal{W}_{k}-\eta\nabla \{\mathcal{L}\left(\hat y_{i,b}, y_{i,b}\right)+\\
             &\lambda \cdot \mathcal{L}\left(L_{s(y_{i,b}),r}^k,
              \ell_{s(y_{i,b}),r}^k(X_{i,b})\right)\} 
         \end{aligned}$\;
        $\begin{aligned}
             \bar \ell_{s(y_{i,b}),r}^k \leftarrow &\bar\ell_{s(y_{i,b}),r}^k +\\
             &\ell_{s(y_{i,b}),r}^k(X_{i,b})\;
        \end{aligned}$\;
         
         $N_{s(y_{i,b}),r}^k \leftarrow N_{s(y_{i,b}),r}^k+1$\;
  
    }
    }
    }
    \For{$s \gets 1$ \KwTo $S$}{
     $\bar\ell^k_{s,r} \leftarrow \bar\ell^k_{s,r} / N_{s,r}^k$\;
    }
    }
    \KwRet{$\{\bar\ell^k_{s,r}\}$}\;
  }
  \vspace{0.5cm}
  \SetKwProg{ServerKnowlgdeDistillation}{ServerknowledgeDistillation}{}{}
  \ServerKnowlgdeDistillation{$(\{\bar{\ell}_{s,r}^k\})$}{
    \For{$s \gets 1$ \KwTo $S$}{
     \ForEach{Client $k \in \{1,2, \dots,K\}$}{
     $ L^k_{s,r} \leftarrow  L^k_{s,r} + \bar\ell^k_{s,r}$\;
    }
    }
    \ForEach{Client $k \in \{1,2, \dots,K\}$}{
     \For{$s \gets 1$ \KwTo $S$}{
     $ L^k_{s,r+1} \leftarrow  \frac{K\cdot L^k_{s,r} - \bar\ell^k_{s,r}}{K-1}$\;
    }
    }
    \KwRet{$\{L^k_{s,r+1}\}$}\;
  }
  \vspace{0.5cm}
   $r \gets r+1$ \;
  }
\end{algorithm}
\subsection{Convergence analysis}
\label{sec:conv}
After applying our segmentation technique and considering the \textit{mean square error (MSE)} as the loss function, i.e $\mathcal{L}(y,\hat y) = \mathbf{MSE}(y,\hat y)$ in (\ref{eq:loss}), we globally  try to solve the following optimization problem:

\begin{equation}
  \begin{aligned}
    \min_{\mathcal{W}_1,\mathcal{W}_2,\dots,\mathcal{W}_K} &\sum_{k}\sum_{s} \left [ \sum_{i\in \mathcal{I}_s^k}\left(y_{i}-\hat{y}_{i}\right)^2 + \right. \\
&\left. \lambda  \sum_{i\in \mathcal{I}_s^k} \sum_{n}  \left(L_{n,s}^k-\ell_{n,s}^k(X_i)\right)^2\right ].
\end{aligned}
\label{eq:opt}
\end{equation}
The convergence of the latter optimization problem can be established through the asymptotical analysis of the two foundational building blocks of federated distillation namely knowledge distillation (KD) \cite{kdist} and Co-Distillation (CD)\cite{cdist} by exploiting the theory of neural tangent kernel (NTK)\cite{ntk}.

\subsubsection{Convergence analysis of Knowledge distillation}
The problem of KD in this configuration is cast into the minimization of the objective function defined as follows:
\begin{equation}
\begin{aligned}
   J(W_k) = &\sum_{s} \sum_{i\in \mathcal{I}_s^k}\left(y_{i}-\hat{y}_{i}\right)^2+ \\
            &\lambda \sum_{s}\sum_{i\in \mathcal{I}_s^k}\sum_{n=1}^{N_h}\left( \ell_{n}\left(X_i\right)-L_{n}\left(X_i\right)\right)^2,
\end{aligned}
\end{equation}
where $W_k$ contains the weights of the the model of a client $k$.
We assume that the activation function $\sigma(\cdot)$ in the hidden layers is Lipschitz continuous and differentiable. The first derivative $\sigma^{\prime}(\cdot)$ is also assumed to be Lipschitz continuous.
Using the gradient descent algorithm with an infinitesimal step-size to solve the problem in \eqref{equ:kd} in the NTK settings \cite{ntk}  results in the following dynamics of the weights:
\begin{equation}
    \frac{\mathrm{d} }{\mathrm{d} \tau}\mathbf W_k^n = \sum_s \mathbf\Phi_n^s\left [ \frac{\alpha_n}{\sqrt{N_h}}\left ( \mathbf y^s-\hat{\mathbf y^s }\right )+\lambda \left ( \mathbf L_n^s - \boldsymbol{\ell}_n^s \right ) \right ]
    \label{equ:wdyn}
\end{equation}
\begin{proof}
In the kernel regime\cite{ntk}, with an infinitesimal step size $\eta$, the gradient descent iterations depicted by $W_k^n\left ( \tau +1 \right ) = W_k^n\left ( \tau \right ) - \eta \nabla J\left ( W_k^n\left ( \tau  \right ) ,\mathcal{D}\right ), \forall n = 1,2,...,N_h$ can be translated into continuous time domain variation given by: 
\begin{align*}
\frac{\mathrm{d} }{\mathrm{d} \tau}\mathbf W_k^n &= -\frac{\mathrm{d} }{\mathrm{d} \mathbf W_k^n}\mathbf J \left ( W_k, \mathcal{D}_k \right ) \\ 
 &= -\frac{\mathrm{d} }{\mathrm{d} \mathbf W_k^n}   \sum_s \sum_{i\in \mathcal{I}_s^k} \left ( \mathbf y_i - \frac{1}{\sqrt{N_h}} \sum_n \alpha_n\sigma\left ( \left ( \mathbf W_k^n \right )^TX_i \right ) \right )^2  \\ 
 &+  \lambda   \sum_s \sum_{i\in \mathcal{I}_s^k} \sum_n \left ( \mathbf L_n(X_i) - \sigma\left ( \left ( \mathbf W_k^n \right )^TX_i \right )  \right )^2   \\ 
 &=  \sum_s 2 \sum_{i\in \mathcal{I}_s^k}  X_i \cdot\sigma^{\prime}\left ( \left ( \mathbf W_k^n \right )^TX_i \right )\left ( \frac{\alpha_n}{\sqrt{N_h}} \left ( \mathbf y_i-\hat{\mathbf y}_i \right ) \right )  \\ 
 &+  \lambda   \sum_s 2 \sum_{i\in \mathcal{I}_s^k} X_i \cdot\sigma^{\prime}\left ( \left ( \mathbf W_k^n \right )^TX_i \right ) \left (  \mathbf L_n(X_i) -  \ell_{n}\left(X_i\right) \right )
\end{align*}
Let $\mathbf \Phi_n^s$ be the line vector containing the values $X_i \cdot\sigma^{\prime}\left ( \left ( \mathbf W_k^n \right )^TX_i \right )$, $\mathbf y^s$ the vector of labels $y_i$ such that $i\in \mathcal{I}_s^k$.
Similarly, $\boldsymbol{\ell}_n^s$ and $\mathbf L_n^s$ are the vectors containing the $n^{th}$ predicted logits of all the samples with labels in segment $s$ (i.e $y_i / i\in \mathcal{I}_s^k$) for the student and the teacher respectively.
Then, it follows that:
\begin{align*}
\frac{\mathrm{d} }{\mathrm{d} \tau}\mathbf W_k^n 
& =\sum_s \mathbf\Phi_n^s \left ( \frac{\alpha_n}{\sqrt{N_h}} \left ( \mathbf y^s-\mathbf{\hat{ y}}^s \right ) \right ) \\
& + \lambda  \sum_s \mathbf \Phi_n^s  \left (  \mathbf L_n^s -  \boldsymbol\ell_{n}^s \right ) \\
& =  \sum_s \mathbf\Phi_n^s\left [ \frac{\alpha_n}{\sqrt{N_h}}\left ( \mathbf y^s-\hat{\mathbf y^s }\right )+\lambda \left ( \mathbf L_n^s - \boldsymbol{\ell}_n^s \right ) \right ] 
\end{align*}
which is the results in \eqref{equ:wdyn}.
\end{proof}
Note that \eqref{equ:wdyn} reflects the modelling by a continuous-time differential equation of the convergence of a trajectory of the discrete algorithm to a smooth curve. Due to the difficulty to analyze the dynamics of the weights, \eqref{equ:wdyn} is translated in dynamics of the logits as:
\begin{equation}
    \frac{\mathrm{d} }{\mathrm{d} \tau}\boldsymbol{\ell}_k^n = \sum_s \left ( \mathbf\Phi_n^s \right )^T\mathbf\Phi_n^s\left [ \frac{\alpha_n}{\sqrt{N_h}}\left ( \mathbf y^s-\hat{\mathbf y^s }\right )+\lambda \left (\mathbf L_{n,k}^s - \boldsymbol{\ell}_{n,k}^s  \right ) \right ].
    \label{equ:ldyn}
\end{equation}
 Indeed,
 \begin{align*}
\frac{\mathrm{d} }{\mathrm{d} \tau}\boldsymbol{\ell}_k^n(X_i) 
& = \frac{\mathrm{d} }{\mathrm{d} \mathbf W_k^n}\boldsymbol{\ell}_k^n(X_i)  \cdot \frac{\mathrm{d} }{\mathrm{d} \tau}\mathbf W_k^n \\
& =  \frac{\mathrm{d} }{\mathrm{d} \mathbf W_k^n} \sigma\left ( \left ( \mathbf W_k^n \right )^TX_i \right ) \cdot \frac{\mathrm{d} }{\mathrm{d} \tau}\mathbf W_k^n \\
&= X_i\cdot \sigma^{\prime}\left ( \left ( \mathbf W_k^n \right )^TX_i \right )\cdot \frac{\mathrm{d} }{\mathrm{d} \tau}\mathbf W_k^n
\end{align*}

Therefore,
\begin{align*}
\frac{\mathrm{d} }{\mathrm{d} \tau}\boldsymbol{\ell}_k^n 
& = \sum_s \sum_{i\in \mathcal{I}_s^k}  \frac{\mathrm{d} }{\mathrm{d} \tau}\boldsymbol{\ell}_k^n(X_i) \\
& = \sum_s \sum_{i\in \mathcal{I}_s^k} X_i\cdot \sigma^{\prime}\left ( \left ( \mathbf W_k^n \right )^TX_i \right )\cdot \frac{\mathrm{d} }{\mathrm{d} \tau}\mathbf W_k^n \\
& = \sum_s \sum_{i\in \mathcal{I}_s^k} \left ( \Phi_n^s(X_i) \cdot \sum_{s^{\prime}} \mathbf\Phi_n^{s^{\prime}}\left [ \frac{\alpha_n}{\sqrt{N_h}}\left ( \mathbf y^{s^{\prime}}-\hat{\mathbf y^{s^{\prime}} }\right ) \right. \right.\\
 & \left. \left.+\lambda \left ( \mathbf L_n^{s^{\prime}} - \boldsymbol{\ell}_n^{s^{\prime}} \right ) \right ]\right )\\
& = \sum_s (\Phi_n^s)^T \Phi_n^s \left [ \frac{\alpha_n}{\sqrt{N_h}}\left ( \mathbf y^s-\hat{\mathbf y^s }\right )+\lambda \left ( \mathbf L_n^s - \boldsymbol{\ell}_n^s \right ) \right ].
\end{align*}
Let $\mathbf \Psi_n^s = (\mathbf \Phi_n^s)^T \mathbf \Phi_n^s$. Then, \eqref{equ:ldyn} becomes:
\begin{equation}
    \frac{\mathrm{d} }{\mathrm{d} \tau}\boldsymbol{\ell}_k^n = \sum_s \mathbf \Psi_n^s\left [ \frac{\alpha_n}{\sqrt{N_h}}\left ( \mathbf y^s-\hat{\mathbf y^s }\right )+\lambda \left (\mathbf L_{n,k}^s - \boldsymbol{\ell}_{n,k}^s  \right ) \right ]
\end{equation}

\begin{remark}
The matrix $\mathbf\Psi$ is referred to as a neural tangent kernel (NTK) \cite{ntk} whose theory establishes that the random initialization of the weights matrix and the over-parameterization hypothesis induce together a kernel regime such that $\mathbf\Psi(\tau)\sim \mathbf\Psi(0), \forall \tau>0$. This remark emanates from the fact that in over-parameterization setting, each weight vector along the trajectory of the gradient descent algorithm remains constant over time and maintains a very tight relationship to its initial value.
\end{remark}

 \begin{lemma}
 \label{lm:conv}
 The student network output converges asymptotically as follows:
 \begin{equation}
 \label{equ:lconv}
\lim _{\tau \rightarrow \infty} \mathbf{F^s}(\tau)= \mathbf{F}^s_{\infty} = \frac{1}{\alpha+\lambda}\left(\alpha \mathbf{y}^s+\lambda \sum_{n=1}^{N_{h}} \frac{\alpha_{n} \mathbf L_{n}^s}{\sqrt{N_{h}}}\right), \forall s
 \end{equation}
 where $ \alpha = \sum_{n=1}^{N_h} \frac{\alpha_n}{N_h}\alpha_n^2$.
 \end{lemma}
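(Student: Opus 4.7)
The plan is to turn Eq.~\eqref{equ:ldyn} into an ODE for the network output $\mathbf{F}^s$, identify the unique stationary point of that ODE in the NTK kernel regime, and then argue the trajectory converges to it as $\tau\to\infty$.

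First, since $\mathbf{F}^s = \hat{\mathbf{y}}^s = \frac{1}{\sqrt{N_h}}\sum_{n=1}^{N_h} \alpha_n\, \boldsymbol{\ell}_n^s$, I would differentiate in $\tau$ and substitute the logit dynamics restricted to segment $s$, obtaining
\begin{equation*}
\frac{d}{d\tau}\mathbf{F}^s \;=\; \frac{1}{\sqrt{N_h}}\sum_{n}\alpha_n\, \mathbf{\Psi}_n^s\!\left[\frac{\alpha_n}{\sqrt{N_h}}\bigl(\mathbf{y}^s-\mathbf{F}^s\bigr) + \lambda\bigl(\mathbf{L}_n^s - \boldsymbol{\ell}_n^s\bigr)\right].
\end{equation*}
Second, for the fixed point I would invoke the remark preceding the lemma: in the overparameterized random-initialization regime, each $\mathbf{\Psi}_n^s$ is (approximately) time-invariant and positive definite, so stationarity $\dot{\boldsymbol{\ell}}_n^s=0$ forces the bracket to vanish for every $n$, i.e.
\begin{equation*}
\frac{\alpha_n}{\sqrt{N_h}}\bigl(\mathbf{y}^s-\mathbf{F}^s_\infty\bigr) + \lambda\bigl(\mathbf{L}_n^s - \boldsymbol{\ell}_{n,\infty}^s\bigr) = 0,\qquad \forall n.
\end{equation*}
Solving for $\boldsymbol{\ell}_{n,\infty}^s$, multiplying by $\alpha_n/\sqrt{N_h}$, summing over $n$, and using $\mathbf{F}^s_\infty = \frac{1}{\sqrt{N_h}}\sum_n \alpha_n\,\boldsymbol{\ell}_{n,\infty}^s$ reduces everything to a single scalar equation for $\mathbf{F}^s_\infty$; collecting terms produces exactly the ratio $\tfrac{1}{\alpha+\lambda}\bigl(\alpha\mathbf{y}^s + \lambda\sum_n \alpha_n \mathbf{L}_n^s/\sqrt{N_h}\bigr)$ announced in \eqref{equ:lconv}, with the constant $\alpha$ emerging as the weighted sum $\sum_n \alpha_n^2/N_h$ weighted again by the output coefficient as defined.

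Third, to upgrade existence of a stationary point to actual convergence, I would set $\Delta\boldsymbol{\ell}_n^s(\tau) := \boldsymbol{\ell}_n^s(\tau) - \boldsymbol{\ell}_{n,\infty}^s$ and note that, because $\mathbf{\Psi}_n^s(\tau)\approx\mathbf{\Psi}_n^s(0)$ throughout training, the stacked vector $\Delta := (\Delta\boldsymbol{\ell}_n^s)_{n,s}$ obeys an essentially linear autonomous ODE $\dot{\Delta}=-\mathbf{M}\,\Delta$, where $\mathbf{M}$ is assembled block-wise from the $\mathbf{\Psi}_n^s$ and the quadratic coupling through $\mathbf{F}^s$. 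Positive definiteness of each $\mathbf{\Psi}_n^s$ then yields strict positivity of $\mathbf{M}$ on the relevant subspace, giving exponential decay of $\|\Delta(\tau)\|$ and hence $\mathbf{F}^s(\tau)\to\mathbf{F}^s_\infty$.

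The main obstacle is precisely this last step. Locating the fixed point is routine algebra once the logit bracket is set to zero, but certifying that the dynamics really contracts toward $\mathbf{F}^s_\infty$ requires (i) that the $\mathbf{\Psi}_n^s$ stay simultaneously positive definite along the whole trajectory, and (ii) that the constant-kernel approximation underlying the NTK regime is valid uniformly in $\tau$. Both are standard consequences of overparameterization and random initialization, so in the writeup I would cite \cite{ntk} rather than reprove them, and devote the bulk of the argument to the algebraic derivation of \eqref{equ:lconv} from the stationarity condition.
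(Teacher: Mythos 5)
Your proof is correct in substance, but it takes a more self-contained route than the paper does. The paper's own proof outsources both halves of the lemma to the cited linear-systems analysis: it writes $\mathbf{F}^s(\tau)=\mathbf{F}^s_\infty+\sum_{j=1}^d\phi_j^s e^{-z_j^s\tau}$ as the response of a finite-order linear system, cites the positivity of the poles $z_j^s$ to kill the transient, and then simply asserts that the infinite-width analysis ``demonstrates the calculation of $\mathbf{F}^s_\infty$'' without ever performing it. You instead locate the fixed point explicitly from the stationarity condition of the logit ODE and obtain convergence from a linear autonomous ODE with a positive-stable system matrix; the two convergence mechanisms are equivalent in content (the positive poles of the paper's linear system are exactly the eigenvalues of your $\mathbf{M}$), but your version actually supplies the algebra that the paper delegates to the reference, which is a genuine gain in completeness. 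Three caveats. First, stationarity forces the bracket to vanish for each $n$ only if every segment Gram matrix $\mathbf{\Psi}_n^s=(\mathbf{\Phi}_n^s)^T\mathbf{\Phi}_n^s$ is nonsingular, which needs over-parameterization relative to the segment size; otherwise these matrices are merely positive semidefinite and you can only conclude that the bracket lies orthogonal to their range. Second, your $\mathbf{M}$ is a product of positive definite factors and hence not symmetric, so ``positive definite'' should be read as ``all eigenvalues have positive real part,'' which still gives the exponential decay you want. Third, your algebra cleanly yields $\alpha=\sum_{n}\alpha_n^2/N_h$, whereas the lemma's displayed definition reads literally as $\sum_n \alpha_n^3/N_h$; your computation is the internally consistent one (it is what makes the $\frac{\alpha}{\alpha+\lambda}\mathbf{y}^s$ coefficient come out), so do not contort the derivation to match what appears to be a typo in the statement.
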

 
 \begin{proof}
 The theory of linear systems with a finite order \cite{Rahbar2020OnTU} has established that a linear dynamics of finite order can roughly represent the gradient descent behavior on the over-parameterized neural network, and the evolution of $\mathbf{F^s}(\tau)$ can be written as:
\begin{equation}
    \mathbf{F}^s(\tau) = \mathbf{F}^s_{\infty} + \sum_{j=1}^{d}\phi^s_je^{-z^s_j\tau}, \forall s
\end{equation}
with $d$ representing the order of the linear system. $\{\phi^s_j\}_{j=1}^d$ are complex-value vectors determined by the specifications of the dynamics, whereas  $\{z^s_j\}_{j=1}^d$ are the poles of the linear system.
Note that the non-zero complex-values $\{z^s_j\}_{j=1}^d$, \mbox{$ z^s_j\neq 0$} correspond the singular points of the Laplace transform of $\mathbf{F^s}(\tau)$ except the zero component $z^s_j = 0$ which corresponds to the constant $\mathbf{F}^s_{\infty}$.

Thanks to the assumptions on the eigenvalues of the matrices $\mathbf\Psi$ at initialization (i.e, under mild assumptions on eigenvalues of $\mathbf\Psi(0)$) in the kernel regime \cite{ntk}, \cite{Rahbar2020OnTU} has shown that with bounded inputs, bounded weights, all existing poles are positive-valued, yielding $\lim_{\tau \rightarrow \infty} \mathbf{F^s}(\tau) = \mathbf{F}^s_{\infty}$. Moreover, the analysis in infinite width regime of the neural network ($N_h\rightarrow \infty$) with the set assumptions demonstrates the calculation of  $\mathbf{F}^s_{\infty}$ and produces the result in \eqref{equ:lconv}.
 \end{proof}

 \begin{remark}
   The student's prediction error with respect to the teacher's can be estimated by:
\begin{equation}
\mathbf{e}^s = \left\|\mathbf{F}^s_{\infty}-\mathbf{y}^s\right\|_{2}=\frac{\lambda}{\alpha+\lambda}\left\|\mathbf{y}^s-\sum_{n} \frac{\alpha_{n} \mathbf L^s_{n}}{\sqrt{N_{h}}}\right\|_{2},  \forall s.
\end{equation}
This result implies that the learning accuracy of the student is highly correlated to the quality of the teacher in representing labels since for an imperfect teacher, the error $\mathbf e^s$ monotonically increases with $\lambda$.
 \end{remark}

\subsubsection{Federated distillation}
Federated distillation extends knowledge distillation to co-distillation where each worker possesses its own data and see the other workers as its teacher, leading to an online knowledge distillation.
Overall, the objective is the minimization of the global loss function defined in \eqref{eq:opt}.
Individually, each student minimizes its local loss function $J(\mathbf W_k)$ using knowledge distillation where the teacher knowledge is the ensemble prediction of the remaining students.
\begin{lemma}
Based on the results of \cref{lm:conv}, and the interactions between students, our optimization problem in ~\eqref{eq:opt} converges asymptotically as:
\begin{equation}
    \label{equ:cdconv}
    \lim _{r \rightarrow \infty} \mathbf{F}_k(r)=\mathbf{y}, \forall k \in\{1,2, \ldots, K\},
\end{equation}
where $r$ designates the $r^{th}$ communication round.
\end{lemma}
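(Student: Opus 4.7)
The plan is to iterate \cref{lm:conv} across communication rounds, since within each round every student runs a knowledge-distillation procedure against a teacher that is fixed (it is the aggregate of the other students' logits sent at the \emph{previous} round). Round by round we can therefore apply the asymptotic formula of \cref{lm:conv} with $\mathbf L_n^s$ replaced by $\mathbf L_{n,k}^{s,(r)}=\frac{1}{K-1}\sum_{k'\neq k}\bar\boldsymbol{\ell}_{n,k'}^{s,(r-1)}$, and then reason about the resulting discrete-round recursion.

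\medskip

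First I would observe that, by \cref{lm:conv}, once the intra-round gradient flow has run to its NTK fixed point, the output of student $k$ on segment $s$ at round $r$ satisfies
\begin{equation*}
\mathbf F_k^{s,(r)} \;=\; \frac{1}{\alpha+\lambda}\!\left(\alpha\,\mathbf y^s+\lambda\sum_{n=1}^{N_h}\frac{\alpha_n\,\mathbf L_{n,k}^{s,(r)}}{\sqrt{N_h}}\right).
\end{equation*}
Denoting $\mathbf T_k^{s,(r)}=\sum_n \alpha_n\mathbf L_{n,k}^{s,(r)}/\sqrt{N_h}$ and using that in the kernel regime each client's aggregated logit vector equals its own output, i.e.\ $\sum_n \alpha_n\bar\boldsymbol{\ell}_{n,k'}^{s,(r-1)}/\sqrt{N_h}=\mathbf F_{k'}^{s,(r-1)}$, the teacher signal supplied to $k$ at round $r$ reduces to the mean of the other students' previous-round outputs,
\begin{equation*}
\mathbf T_k^{s,(r)}\;=\;\frac{1}{K-1}\sum_{k'\neq k}\mathbf F_{k'}^{s,(r-1)}.
\end{equation*}

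\medskip

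Next I would collapse this into a linear recursion on the stack $\mathbf F^{(r)}=(\mathbf F_1^{s,(r)},\dots,\mathbf F_K^{s,(r)})$ for each fixed segment $s$. Writing $c=\lambda/(\alpha+\lambda)\in(0,1)$ and letting $A\in\mathbb R^{K\times K}$ be the row-stochastic matrix with $A_{k k'}=\tfrac{1}{K-1}\mathbf 1_{k'\neq k}$, the previous display becomes
\begin{equation*}
\mathbf F^{(r)}\;=\;(1-c)\,\mathbf y\,\mathbf 1_K + c\,(A\otimes I)\,\mathbf F^{(r-1)}.
\end{equation*}
Subtracting the candidate fixed point $\mathbf y\mathbf 1_K$ (which is indeed fixed, because $A\mathbf 1_K=\mathbf 1_K$) gives the error recursion
\begin{equation*}
\mathbf F^{(r)}-\mathbf y\,\mathbf 1_K\;=\;c\,(A\otimes I)\bigl(\mathbf F^{(r-1)}-\mathbf y\,\mathbf 1_K\bigr).
\end{equation*}

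\medskip

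Since $A$ is row-stochastic, $\|A\|_\infty=1$, so the recursion contracts at geometric rate $c<1$ in the $\ell_\infty$ norm. Iterating yields $\|\mathbf F^{(r)}-\mathbf y\mathbf 1_K\|_\infty\le c^{r}\|\mathbf F^{(0)}-\mathbf y\mathbf 1_K\|_\infty\to 0$, and the same argument applied segment by segment gives $\mathbf F_k(r)\to\mathbf y$ for every $k$, which is precisely \eqref{equ:cdconv}. The main obstacle, and the step that deserves the most care, is the second one: justifying that in the NTK/kernel regime the within-round dynamics of every client really does reach the closed-form limit of \cref{lm:conv} while the teacher is held fixed, so that the round-to-round update is genuinely the linear map above rather than a coupled nonlinear flow; this rests on the time-scale separation implicit in the intra-round $\tau\to\infty$ limit and on the kernel constancy $\mathbf\Psi(\tau)\sim\mathbf\Psi(0)$ remarked upon after \eqref{equ:ldyn}. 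Once that separation is accepted, the contraction argument above is routine.
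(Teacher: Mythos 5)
Your proposal is correct, and its first half is exactly the paper's: both arguments apply \cref{lm:conv} once per communication round (treating the teacher signal as frozen at the previous round's aggregated outputs) to obtain the linear inter-round recursion $\mathbf{F}_k(r)=\frac{1}{\alpha+\lambda}\bigl[\alpha\mathbf{y}+\frac{\lambda}{K-1}\sum_{k'\neq k}\mathbf{F}_{k'}(r-1)\bigr]$, and both correctly flag that the only real modelling commitment is the time-scale separation that lets the intra-round NTK flow reach its fixed point before the teacher is refreshed. Where you diverge is in how you finish. The paper unrolls the recursion one extra step, introduces the auxiliary sequence $\mathbf{u}_{r}=(\alpha+\lambda)\mathbf{F}_k(r)-\alpha\mathbf{y}$, derives a second-order linear non-homogeneous recurrence, and solves it in closed form, reading off convergence from the two geometric modes $\left(\frac{\lambda}{\alpha+\lambda}\right)^{r}$ and $\left(-\frac{\lambda}{(K-1)(\alpha+\lambda)}\right)^{r}$. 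You instead stack the $K$ outputs, recognize the update as $(1-c)\mathbf{y}\mathbf{1}_K+c(A\otimes I)\mathbf{F}^{(r-1)}$ with $A$ row-stochastic, and contract in $\ell_\infty$ at rate $c=\lambda/(\alpha+\lambda)<1$. Your route is cleaner and more robust: the paper's scalar recurrence implicitly treats $\mathbf{u}_r$ as independent of the client index $k$, which only holds under a symmetry that the matrix formulation makes unnecessary, and your two eigenvalues of $cA$ (namely $c$ on $\mathbf{1}_K$ and $-c/(K-1)$ on its complement) are precisely the paper's two geometric ratios, so nothing is lost. What the paper's explicit solution buys in exchange is the exact transient, i.e.\ the coefficients $\gamma$ and $\rho$ in terms of the initial outputs $\mathbf{F}^{k}(0)$, whereas your $\ell_\infty$ bound only gives the rate. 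Both arguments share the same implicit requirement that $\alpha>0$ so that $c<1$; neither proof makes this explicit.
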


\begin{proof}
The first iteration of knowledge distillation is performed after initialization and a first run of the gradient descent by all students. Consequently, the first updates i.e, $\{\mathbf{F}^{k}(0)\}_{k=1}^K$ are shared. Then each student $k$ locally runs gradient descent and, by \cref{lm:conv}, its model converges to:
\begin{equation}
    \mathbf{F}^{k}(1)=\frac{1}{\alpha+\lambda}\left(\alpha \mathbf{y}+\frac{\lambda}{K-1} \sum_{k^{\prime}\neq k}^{K} \mathbf{F}^{k^{\prime}}(0)\right).
\end{equation}
Therefore, at the $r^{th}$ communication round, the output of the model of the student $k$ converges to:
\clearpage
\begin{strip}
\begin{align}
\mathbf{F}^{}_k(r) 
&=\frac{1}{\alpha+\lambda}\left[\alpha \mathbf{y}+\frac{\lambda}{K-1} \sum_{k^{\prime}\neq k}^{K} \mathbf{F}_{k^{\prime}}^{}(r-1)\right] \label{eq:fr} \\
&=\frac{1}{\alpha+\lambda} \left[ \alpha \mathbf{y} + \frac{\lambda}{K-1} \sum_{k^{\prime}\neq k}^{K} \left( \frac{1}{\alpha+\lambda}\left[\alpha \mathbf{y}+\frac{\lambda}{K-1} \sum_{k^{"}\neq k^{\prime}}^{K} \mathbf{F}_{k^{"}}^{}(r-2)\right] \right) \right]  \\
&=\frac{1}{\alpha+\lambda} \left[ \alpha \mathbf{y} + \frac{\lambda}{(K-1)(\alpha+\lambda)} \left(\alpha(K-1)\mathbf{y} + \frac{\lambda}{K-1} \sum_{k^{\prime}\neq k}^{K} \sum_{k^{"}\neq k^{\prime}}^{K} \mathbf{F}_{k^{"}}^{}(r-2)\right) \right]   \\
&=\frac{1}{\alpha+\lambda} \left[ \alpha \mathbf{y} + \frac{\lambda}{(K-1)(\alpha+\lambda)} \left(\alpha(K-1)\mathbf{y} + \lambda \mathbf{F}^{}_k(r-2) + \frac{\lambda(K-2)}{K-1} \sum_{k^{\prime}\neq k}^{K}  \mathbf{F}_{k^{\prime}}^{}(r-2)\right) \right] \label{eq:frr} 
\end{align}
\end{strip}
On the other hand, \eqref{eq:fr} can be rewritten as: 
\begin{equation}
\label{eq:frnew}
    (\alpha+\lambda)\mathbf{F}^{}_k(r) - \alpha \mathbf{y} =\frac{\lambda}{K-1} \sum_{k^{\prime}\neq k}^{K} \mathbf{F}_{k^{\prime}}^{}(r-1).
\end{equation}
Then, let
\begin{equation}
\label{eq:frrec}
 u_{r+1} =\frac{\lambda}{K-1} \sum_{k^{\prime}\neq k}^{K} \mathbf{F}_{k^{\prime}}^{}(r)=(\alpha+\lambda)\mathbf{F}^{}_k(r+1) - \alpha \mathbf{y}.
\end{equation}
By introducing $u_{r+1}$ in \eqref{eq:frr}, we can construct a linear non-homogeneous recurrence relation given by:
\begin{equation}
    \begin{aligned}
\mathbf{u}_{r+1}&=\frac{(K-2) \lambda}{(K-1)(\alpha+\lambda)} \mathbf{u}_{r}+\frac{\lambda^{2}}{(K-1)(\alpha+\lambda)^{2}} \mathbf{u}_{r-1}\\
&+\frac{\lambda^{2} \alpha \mathbf{y}}{(K-1)(\alpha+\lambda)^{2}}+\frac{\lambda \alpha \mathbf{y}}{(\alpha +\lambda)}.
\end{aligned}
\end{equation}
The resolution of such recurrence relation as presented in \cite{rosen_2012},  provides a close-form solution defined as:
\begin{equation}
\mathbf{u}_{r}= \lambda \mathbf{y} + \gamma\left(\frac{\lambda}{\alpha+\lambda}\right)^{r}+\rho\left(-\frac{\lambda}{(K-1)(\alpha+\lambda)}\right)^{r}
\end{equation}
with $\gamma=\frac{\lambda}{K} \sum_{k=1}^{K} \mathbf{F}^{k}(0)-\lambda \mathbf{y}$ and \\ $\rho=\frac{\lambda}{K(K-1)} \sum_{k^{\prime}\neq k}^{K} \mathbf{F}^{k^{\prime}}(0)-\frac{\lambda}{K} \mathbf{F}^{k}(0)$. Combining \eqref{eq:frnew} and \eqref{eq:frrec}, we obtain the following:
\begin{equation}
    \mathbf{F}^{k}(r)=\frac{1}{\alpha+\lambda}(\alpha \mathbf{y}+\mathbf{u}_r).
\end{equation}
It follows that: 
\begin{equation}
    \lim _{r \rightarrow \infty} \mathbf{F}^{k}(r)=\frac{1}{\alpha+\lambda}(\alpha \mathbf{y}+ \lim _{r \rightarrow \infty}\mathbf{u}_r) =\mathbf{y},
\end{equation}
due to the fact that $\lim _{r \rightarrow \infty}\mathbf{u}_r = \lambda \mathbf{y}$, for $K\geq 2$. Indeed for $K \geq 2$, $\left|-\frac{\lambda}{(K-1)(\alpha+\lambda)}\right|<1$ and $\left|\frac{\lambda}{\alpha+\lambda}\right|<1$.
\end{proof}
Extensive numerical evaluation using  simulation as well as  experimental data in Section~\ref{sec:performanceevaluation} confirm the convergence property of our FD system.
\subsection{Communication System}
\label{sec:com}
\subsubsection{Energy consumption}
In a FL system as well as a FD system, regarding the IoT network resources limitations, the overall system efficiency heavily relies on the communication system since the energy consumption $\mathcal{E}$ of every single device participating in the training process is expressed as:
\begin{equation}
   \mathcal{E}=\left(\mathcal{E}_{\mathrm{C}}+\mathcal{E}_{\mathrm{T}}\right) \times \mathcal{N}_{\mathrm{T}},
   \label{eq:energy}
\end{equation}
where $\mathcal{E}_C$ is the computation energy of each device during the local training at each training step or communication round. $\mathcal{E}_T$, the transmission energy, is the quantity of energy used by each device to transmit its parameters to the server at each communication round, and $\mathcal{N}_{\mathrm{T}}$ is the total number of training steps till convergence. 

From ~\eqref{eq:energy}, we can see that the total energy consumption depends on three major components namely: $(i)$ the ML model parameters size,
$(ii)$ the local computation and $(iii)$ the number of communication rounds.

Yet our proposed FD framework by design solves the problem $(i)$. In fact, instead of exchanging model parameters, FD devices only transmit their predictions to the server which greatly reduces the communication cost. However, this reduction causes a slight accentuating of the problem $(ii)$ because of the regularization term in the devices updated loss functions. Lastly, the problem $(iii)$ is data dependant.
Therefore, for this work, assuming the same number of communication rounds for both \ac{FL} and \ac{FD}, we only focus on the transmission energy $\mathcal{E}_T$ which is the dominant part of the system overall energy consumption.

To clearly illustrate the communication behaviour of our framework, we consider a simple multiple access communication system. 
In fact, on the uplink, FD clients share a Gaussian multiple-access channel whose equation is given by:
\begin{equation}
\mathbf{y}_r=\sum_{c=1}^{C}  \mathbf{h}_r^{c}\mathbf{x}_r^{c}+\mathbf{z}_r,
\end{equation}
where $\mathbf{x}_r^{c}$ is the signal to be transmitted by client $c$ at the $r^{th}$ communication round, $\mathbf{h}_r^{c}$ the corresponding channel response and $\mathbf{z}_r$ an additive \ac{I.I.D} Gaussian noise with  $\mathbf{z}_r \sim \mathcal{N}(0,1)$.
For the sake of simplicity, the downlink communication is assumed to be noiseless so that we can focus on the more challenging shared uplink.
Thus, for FD, at each iteration, each client transmits its local average estimations per segment over a  wireless  shared  uplink channel  with  the access point connected to the server. Assuming that the target variables have dimension $N_o$ each divided into $\mathcal{S}$ segments, 
the output vector to be transmitted is given by:
\begin{equation}
    \mathbf{V}= \begin{bmatrix}
\mathbf{\bar{f}}_1^c \\ 
\mathbf{\bar{f}}_2^c \\ 
 \vdots \\ 
\mathbf{\bar{f}}_s^c \\ 
\vdots \\ 
\mathbf{\bar{f}_\mathcal{S}}^c 
\end{bmatrix}
 \ \ \ \text{with}\ \ \   \mathbf{\bar{f}}_s^c = \begin{bmatrix}
 \mathbf{\bar{f}}_s^c(1)\\ 
 \mathbf{\bar{f}}_s^c(2) \\ 
\vdots\\ 
 \mathbf{\bar{f}}_s^c(L)
\end{bmatrix}^T.
\end{equation}
Thus, unlike FL where each device transmits $ W \times R$ bits to the server for each communication round, in FD the total number of transmitted bits per device per communication round is given by:
\begin{equation}
    \mathcal{N}_{bits} = \mathcal{S} \times N_o \times R,
    \label{equ:bits}
\end{equation}
where $R$ is the bits resolution. The total number of parameters of our considered neural network is defined as:
\begin{equation}
    W = (N_i \times N_h)+N_h+ (N_h \times N_o) + N_o. 
    \label{equ:weights}
\end{equation}
So obviously, it follows that $\mathcal{S} \times N_o \ll W$ which makes our model much more communication-efficient than FL. 
This lightweightness of our system overcomes the bandwidth and energy limitations since each device involved in the training process is subject to the power constraint $\mathrm{E}\left[\left\|\mathrm{x}_r^{c}\right\|_{2}^{2}\right] / T \leq \mathcal{P}$ \cite{network},
where $T$ is the number of channels and $\mathcal{P}$ is the maximum transmit power.
Indeed, considering a conventional digital implementation as in \cite{com}, the channel's uplink capacity is equally shared by the devices leading to bandwidth constraints limiting the number of bits $B$ that can be transmitted by each device. In fact, considering the Shannon’s capacity \cite{shannon}, the maximum number of bits per transmission for each device is given by: 
\begin{equation}
    {B}_{max}^{r,c}=\frac{T}{ C} \log _{2}(1+C \left | \mathbf{h}_r^c \right |^2 \mathcal{P}), \text{ so that  } {B} \leq {B}_{max}^{r,c}.
\end{equation}
To sum up, this constraint does not affect too much the FD system since the condition $\mathcal{N}_{bits} < B_{max}^{r,c}$ is very likely to always hold as demonstrated by the results in Section~\ref{sec:performanceevaluation}.

\section{Performance Evaluation }
\label{sec:performanceevaluation}

\subsection{Indoor Localization with Federated Distillation}
\subsubsection{System model and data collection}
Our goal is to construct a DNN based indoor positioning system (IPS) to predict the location of an IoT device given its RSSI measurements in that location. Therefore, the dataset is an ensemble of measurements of RSSI from different \acp{AP} at different locations and at different time to capture the channel variations. To do so, we consider an indoor IoT network in an area of interest of size $l\times w\  m^2$ containing $M$ \acp{AP} randomly distributed over the area. For the data crowdsourcing, we consider a pool of $C$ FD clients. Each client will collect RSSI data from the environment to train a local model. The environment is characterized by its path loss exponent $\beta$ and its sigma-shadowing $\sigma$ representing the channel attenuation.\\
In this localization system we exploit the RSSI values gathered from different \acp{AP} in the network. Consequently, for the data collection we consider $N$ \acp{RP} such that at each \ac{RP} we capture $M$ RSSI values from the $M$ \acp{AP}, taking into account the out-of-range \acp{AP}. At each captured RSSI, the corresponding RP is added as a label.  Moreover, at each \ac{RP}, the operation is repeated for $k$ time intervals in order to capture the variations and different impairments experienced by the wireless channel. \\
\subsubsection{Simulated RSSI data}
The RSSI measurements are strongly time-space varying and depend on the path loss model for the corresponding environment.
In this work, similar to work in \cite{Njima2021}, we use the log-distance path loss model formulated as follows: 
\begin{equation}
    PL=PL_{0}+10\beta \log _{10}{\frac {d}{d_{0}}}+X_{\sigma},
\end{equation}
where $\beta$ is the path loss exponent, $d_0$ is the reference distance, $X_\sigma$ represents the log-normal shadowing with standard deviation $\sigma$ in dB. $PL_0$ is the path loss at the reference distance $d_0$ and given by the free space propagation model as:
\begin{equation}
    {PL_0} =10\log _{10}\left(\left({4\pi d_0\cdot \frac {f}{c}}\right)^{2}\right),
\end{equation}
with $f$ the frequency of the signal and $c$ the speed of light.


Then the RSSI is computed as follows:
\begin{equation}
{\begin{aligned}
RSSI &= P_{Rx} = P_{Tx} - PL. \\
\end{aligned}}
\end{equation}
Consequently, the RSSI value captured from the $m^{th}$ AP at the $n^{th}$ RP at time $t$ can be expressed as:
\begin{equation}
    \begin{aligned}
    RSSI_{n,t}^m &= P_{Tx} - \left[20\log _{10}\left(\frac{4\pi d_0}{c}\right)+20\log _{10}\left(f\right)+ \right. \\
    &\left. 10\beta \log _{10}\left(\frac {d_n^m}{d_{0}}\right)+X_{\sigma,{n,m}}^t\right].\\
    \end{aligned}
    \label{equ:rssi}
\end{equation}

\subsection{Simulation Results}
\begin{figure*}[!t]
    \centering
    \begin{subfigure}[b]{0.33\textwidth}
        \centering
        \includegraphics[width=\textwidth,height=0.2\textheight]{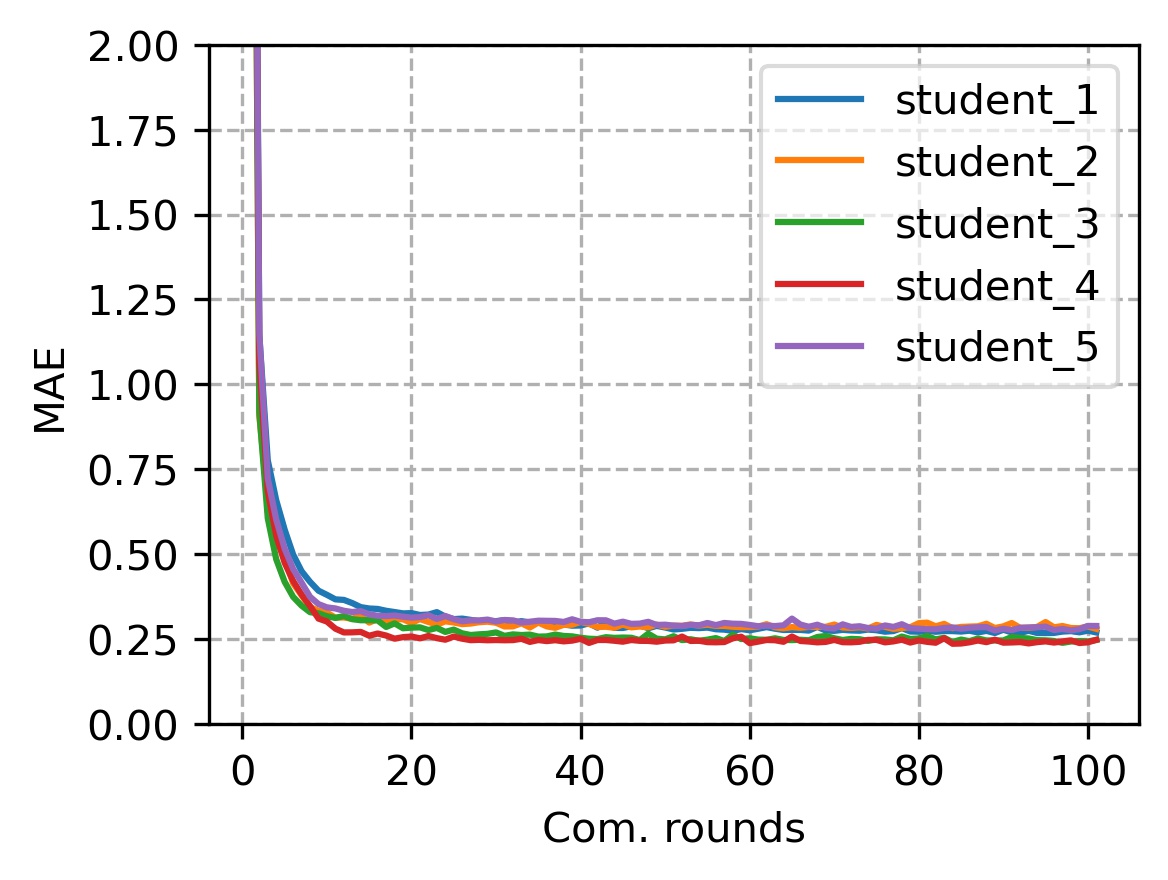}
        \caption{Convergence of student models}
        \label{fig:sim-conv}
    \end{subfigure}%
    \hfill
    \begin{subfigure}[b]{0.33\textwidth}
        \centering
        \includegraphics[width=\textwidth,height=0.2\textheight]{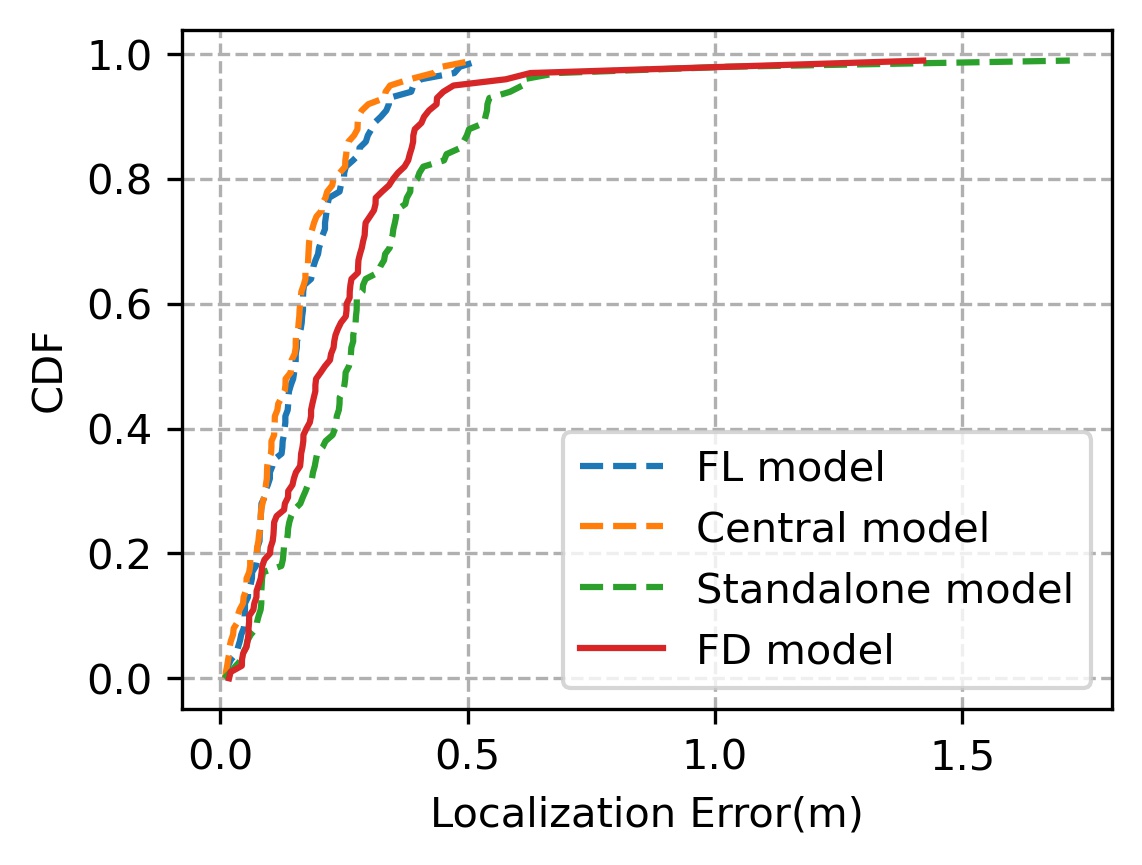}
        \caption{Performance comparison  }
        \label{fig:sim-cdf}
    \end{subfigure}
    \hfill
    \begin{subfigure}[b]{0.33\textwidth}
        \centering
        \includegraphics[width=\textwidth,height=0.2\textheight]{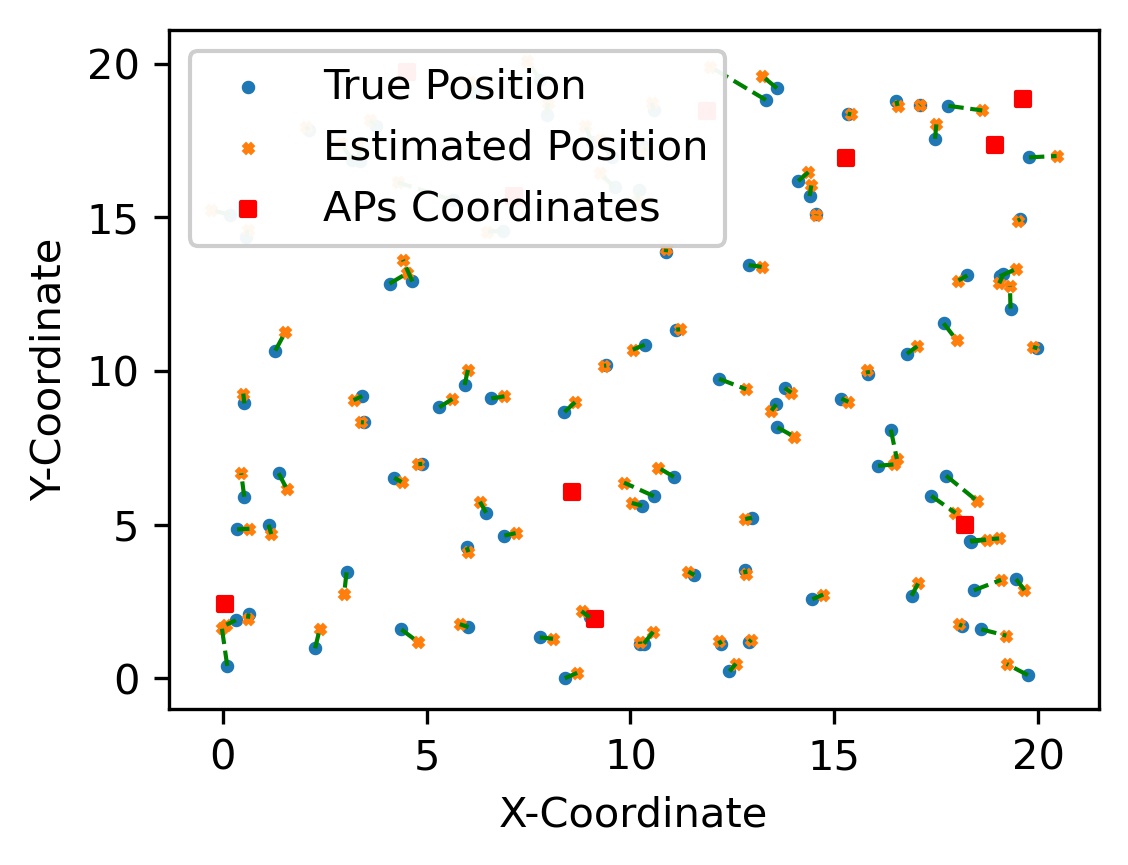}
        \caption{Estimation errors of FD model}
        \label{fig:sim-net}
    \end{subfigure}
    \caption{Regression based FD framework evaluation on simulation data.}
    \label{fig:sim-fdr}
\end{figure*}
\subsubsection{Set-up}
For the simulation, we consider a Wi-Fi powered IoT network of $l\times w=20\times20m^2$ with $M=10$ APs where the data are collected from $N=100$ RPs. The environment variables $\beta$ and $\sigma$ are set to $3.23$ and $2$ respectively according to the experimental measurements conducted in \cite{Njima2020}. 
The number of repetitions is set to $T=10$.
In order to make the dataset reproducible, the RSSI values are generated using \eqref{equ:rssi} with a random seed set to 200.\\
The objective being the training of DNN localization models using federated distillation, we set the number of clients (workers or students) $C$ to $5$.
For the DNN models, we consider a \ac{MLP} characterized by the configuration given in TABLE \ref{tab:dnn-config}.
\begin{table}[!ht]
\caption{Parameters of the \ac{MLP} model }
\centering
    \begin{tabular}{ |m{4em}|m{12em}|m{4em}| }
   \hline
 Parameter & Description & Value  \\
 \hline
 Optimizer & DNN Model Optimizer & Adam \cite{adam}  \\
 \hline
 $\mu$ & Learning rate & 0.0001    \\
\hline
$\beta_1, \beta_2$  & Exponential decay rates & $0.1, 0.99$  \\
\hline
$N_i$ & Input layer units & 10\\  
\hline
$N_h$ & Hidden layer units & $1\times1000$\\  
\hline
$N_o$ & Output layer units & 2\\  
\hline
$\sigma_h(\cdot)$ & Hidden Activation function & ReLu\\  
\hline
$\sigma_o(\cdot)$ & Output Activation function & Linear\\  
\hline
$\mathcal{C}$ & Number of students (clients) & 5\\  
\hline
$b$ & Batch size & 32\\  
\hline
$\mathcal{R}$ & Communication rounds for FL and FD & 100\\  
\hline
$\lambda$ & Regularization term & 0.1\\  
\hline
\end{tabular}
\label{tab:dnn-config}
\end{table}

\subsubsection{Results}
The models are trained by the $C$ students and the FD server using Algorithm \ref{alg:fdr}. The evaluation is done using the mean absolute error (MAE) metric defined by:
\begin{equation}
\begin{aligned}
MAE &=\mathbb{E}\left [  \mathbf{y}-\mathbf{\hat y} \right ]  = \frac{1}{N}\sum_{i=1}^{N}\left \| \mathbf{y}_i - \mathbf{\hat y}_i \right \|_1 \\
 & = \frac{1}{N}\sum_{i=1}^{N} \frac{\left |x_i-\hat x_i\right | + \left |y_i-\hat y_i\right |}{2},
\end{aligned}
\end{equation}
where $\mathbf{y}_i = [x_i\  y_i]^T$ and $\mathbf{\hat y}_i = [\hat x_i \ \hat y_i]^T$ are respectively the true and estimated locations corresponding to the $i^{th}$ \ac{RSSI} recording in the validation dataset.
The results are presented in Fig.~\ref{fig:sim-fdr}.
In fact, we can see from Fig.~\ref{fig:sim-conv} that all the students' models converge to the same MAE value, meaning that the federated students are effectively learning from each other in order to consolidate their models.
To analyse the performance of our framework, we consider different learning scenarios namely federated learning (FL), centralized learning (CL) which is the traditional training approach, and standalone learning (SL) where the student train independently its model. As such, in Fig. \ref{fig:sim-cdf}, a single data point per reference position is chosen to feed the trained models and the predictions are compared to the ground-truth labels. We can see that the FD model improves the localization accuracy compared to the SL since the students share their knowledge with each other. Nonetheless, it remains less accurate than FL and CL due to the nature of its operating mode. It is important to note that this accuracy decreases is the price of a huge communication gain which is the principal goal of this work. 
Indeed, in terms of communication complexity as shown in TABLE~\ref{tab:rmse}, with a bits resolution $R=32$, the number of segments $\mathcal{S} =10$ and the output dimension $N_o = 2$ using $C=5$ students, the \ac{FD} model is far better than the \ac{FL}. In fact, with \ac{FD}, only $5\times640 = 3200$ bits \eqref{equ:bits}
are transmitted at each round in comparison to $5\times 416064 = 2080320$ bits \eqref{equ:weights} for \ac{FL}, leading to a FD-to-FL ratio of $0.15\%$. Consequently, \ac{FD} can save up to $99.85\%$ of the transmission energy $\mathcal{E}_T$ used in \ac{FL} while remaining only $1.6\times$ less accurate than \ac{FL}.

Finally, we can see from Fig. \ref{fig:sim-net} how accurate is our localization framework by observing the estimation errors plotted therein. The network overall \ac{RMSE} is computed for each model and the results are presented in TABLE \ref{tab:rmse}.
These results attest that our framework is suitable for localization in indoor IoT network since it presents relatively good accuracy regarding the geometry of the network and preserves network resources.

\subsection{Experimental validation and benchmark}
\begin{figure*}[!t]
     \centering
     \begin{subfigure}[b]{0.325\textwidth}
         \centering
         \includegraphics[width=\textwidth,height=0.2\textheight]{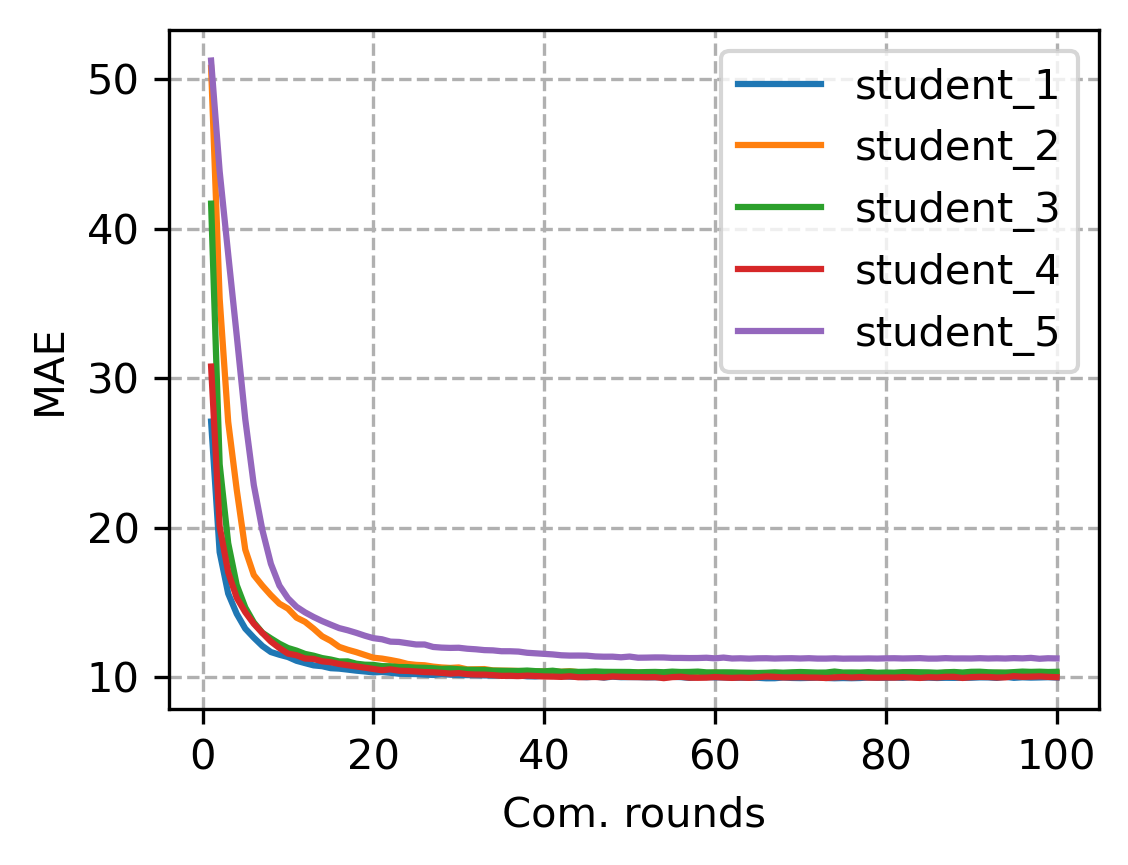}
         \caption{Convergence of \ac{FD} models}
         \label{fig:exp-uji-conv}
     \end{subfigure}
     \hfill
     \begin{subfigure}[b]{0.325\textwidth}
         \centering
         \includegraphics[width=\textwidth,height=0.2\textheight]{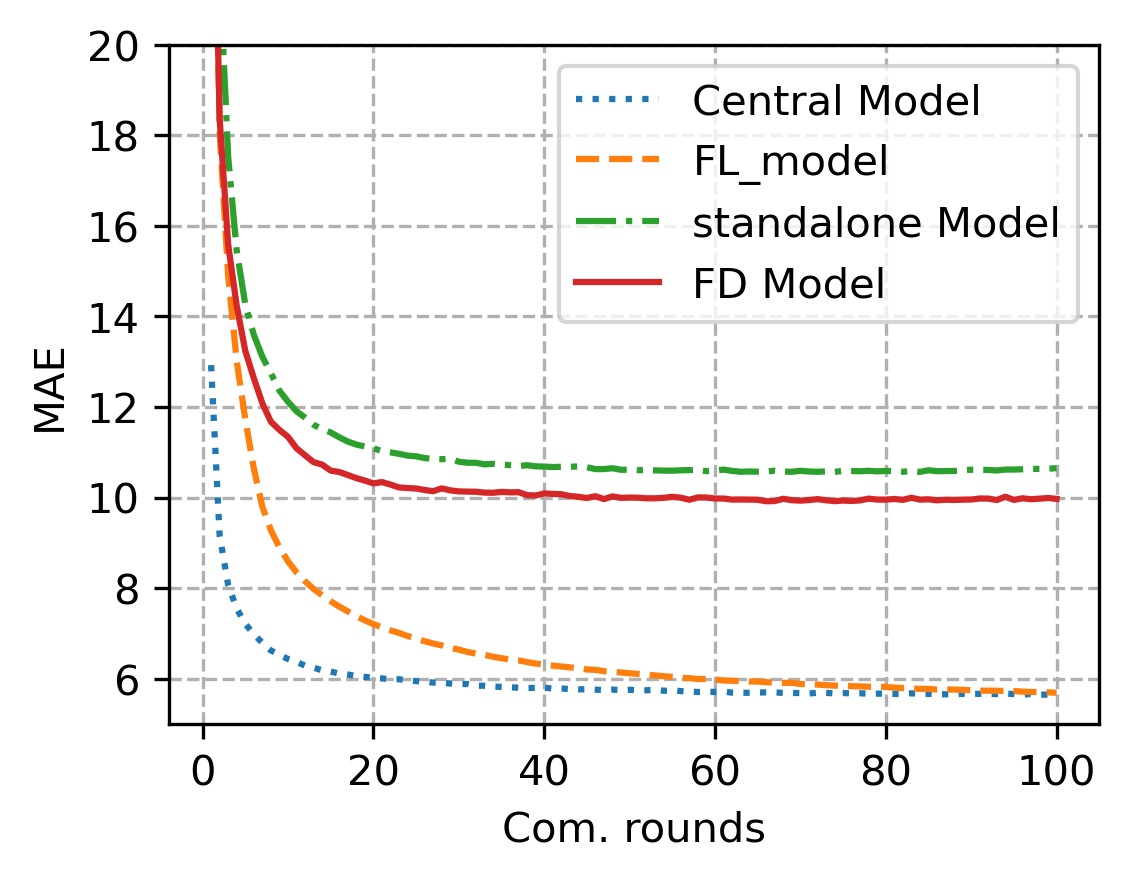}
         \caption{Models comparison}
         \label{fig:exp-uji-bchmk}
     \end{subfigure}
     \hfill
     \begin{subfigure}[b]{0.325\textwidth}
         \centering
         \includegraphics[width=\textwidth,height=0.2\textheight]{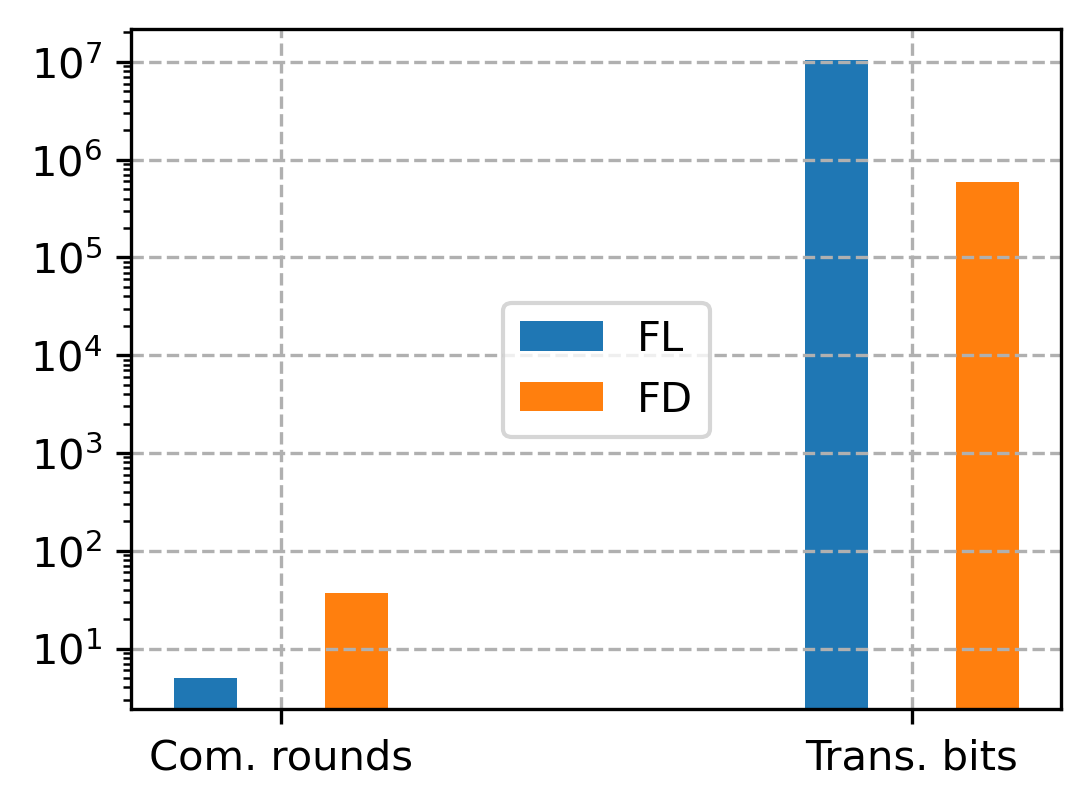}
         \caption{Comparison for MAE = 10m}
         \label{fig:exp-uji-smae}
     \end{subfigure}
        \caption{Regression based FD framework evaluation on UJIIndoorLoc experimental data.}
        \label{fig:exp-uji}
\end{figure*}
In this part, we implement our framework using experimental localization data as well as other regression tasks data.
First of all, we perform indoor localization using UJIIndoorLoc WiFi fingerprinting dataset \cite{ujiindoorloc} collected on a campus area of approximately $390m\times270m$ containing $21048$ entries. The results are presented in Fig.~\ref{fig:exp-uji}. Fig.~\ref{fig:exp-uji-conv} shows the convergence of all the MAE of all students to the same MAE value which confirms the convergence of the proposed FD algorithm. In Fig.\ref{fig:exp-uji-bchmk}, our proposed FD model  is benchmarked with the newly introduced FL model in \cite{FLLcrowd2020} in the context of indoor localization using the same dataset in \cite{ujiindoorloc}. Note that we used the same \ac{DNN} model presented in \cite{FLLcrowd2020} on all the datasets. In order to highlight the relevance of our FD framework, we keep showing the results of the standalone model (independent learning) as well as the centralized system whenever possible. It comes out that the FD learning slightly improves the estimation accuracy compared to its standalone (baseline) model, while remaining less accurate than the FL approach. 

The benefit of FD over FL in terms of energy efficiency is seen in Fig.\ref{fig:exp-uji-smae} where the number of communication rounds and the transmitted bits of FD are compared with those of FL for a fixed accuracy (MAE = 10m). In fact, even though it takes several rounds, FD transmits fewer bits to achieve the same accuracy as FL. Due to the fact that the computation energy is lower in comparison to the transmission energy, the FD still consumes significantly less energy when compared to the FL. This is a substantial advantage for IoT systems with very limited bandwidths. Therefore, for the same localization accuracy, the proposed FD framework comes with a greater energy efficiency by a reasonable saving of transmission energy.
Simply put, in addition to the slight increase of the accuracy of FD regarding its baseline model, the main advantage of this framework over FL relies in its communication efficiency as shown in TABLE~\ref{tab:bitscomp}. Therein, it can be seen that \ac{FD} can save more than $98\%$ of the transmitted bits per each communication round. Consequently, the transmission energy $\mathcal{E}_T$ is minimized.

Moreover, we perform outdoor localization in an urban \ac{LoRaWAN} using the dataset reported in \cite{urbanlora} and diamond prices prediction using the dataset in \cite{diamonds}. We also present the results of the baseline models used in the original works where the datasets are presented. The results are summarized in TABLE~\ref{tab:bitscomp} for the communication complexity and TABLE \ref{tab:rmsecomp} for the accuracy. 
Note that for fair comparison between different systems, the same number of epochs are used for standalone training as well as for centralized training considering the number of local epochs per round and the number of communication rounds in the federated settings. It worth noting that for UJIIndoorLoc and LoRa datasets, the target variables have dimension 2 (2-D localization) while diamond dataset's is  1-D (the diamond's price). The results assert that our framework can be used for any localization system in wireless networks and by extent, for any regression task since it has also shown great performance with the diamond pricing task which is a typical regression problem.

\begin{table}[!t]
\caption{Network global \ac{RMSE}}
\centering
    \begin{tabularx}{0.4\textwidth} { 
  | >{\raggedright\arraybackslash}X
  | >{\raggedright\arraybackslash}X
  | >{\centering\arraybackslash}X| }
   \hline
 Model & RMSE(m) & bits/round  \\
 \hline
 FD Model & 0.35 & 3200  \\
 \hline
 Standalone Model & 0.38 & -   \\
\hline
FL Model  & 0.21  & 2080320 \\
\hline
Central Model  & 0.20 & - \\  
\hline
\end{tabularx}
\label{tab:rmse}
\end{table}

\begin{table}[!t]
\caption{Comparison of bits/round on different datasets}
\centering
  \begin{tabular}{ |m{4em}|m{6em}|m{4em}|m{4em}| }
   \hline
 Model & UJIIndoorLoc\cite{ujiindoorloc} & Urban LoRa\cite{urbanlora} & Diamonds Pricing\cite{diamonds}   \\
 \hline
 FD Model & 3200 & 3200 & 1600 \\
\hline
FL Model & 1739520 & 267520 & 147360  \\
\hline
FD-to-FL ratio & 0.18\% & 1.19\% & 1.08\%  \\
\hline
\end{tabular}
\label{tab:bitscomp}
\end{table}

\begin{table}[!t]
\caption{Comparison of RMSE (m) on different datasets}
\centering
  \begin{tabular}{ |m{4em}|m{6em}|m{4em}|m{4em}| }
   \hline
 Model & UJIIndoorLoc\cite{ujiindoorloc} (MAE) & Urban LoRa\cite{urbanlora} & Diamonds Pricing\cite{diamonds}   \\
 \hline
 FD Model & 19.75 (9.91) & 483.42 & 178.46 \\
 \hline
 Standalone Model &20.67 (10.80) & 484.48 & 206.76  \\
\hline
FL Model & 11.02 (5.76) & 438.44 &  46.99  \\
\hline
Central Model  & 10.81 (5.66) & 440.35 & 40.51 \\  
\hline
Baseline Model & 7.98 (-)  & 398.40 & 462.62\\  
\hline
\end{tabular}
\label{tab:rmsecomp}
\end{table}


\subsection{Scaling Capability}
\vspace{-0.01in}
In this part, we discuss the scalability potential of our framework by varying the number of students involved in the learning process. The accuracy in terms of RMSE and the number of transmitted bits $\mathcal{N}_{bits}$ are highlighted, and the results are presented in Fig. \ref{fig:scalability}.\\
Fig. \ref{fig:rmse}  shows that as the number of students increases, the system accuracy keeps improving for both FL and FD with more effect on the FL system. This observation is straightforward since in FD, students contribute to the learning by only sharing their output in contrast to FL where all the models' parameters are shared in the network. On the other hand, Fig. \ref{fig:bits} presents the communication loads in terms of the total number of transmitted bits by all students to the server, showing the prominence of using FD. We observe that this results confirms \eqref{equ:bits} and effectively the FD is more bandwidth conservative than FL. The FD/FL ratios of these metrics are shown in Fig. \ref{fig:ratio} where we can see that in FD we transmit only $1.46\%$ of the number of bits transmitted in FL, making FD $\sim 98,54\%$ more  efficient than FL in terms of communication while remaining only $\sim 1\times \text{ to } 2.7 \times$ less accurate.


\begin{figure*}[!ht]
    \centering
    \begin{subfigure}[b]{0.33\textwidth}
        \centering
        \includegraphics[width=\textwidth,height=0.2\textheight]{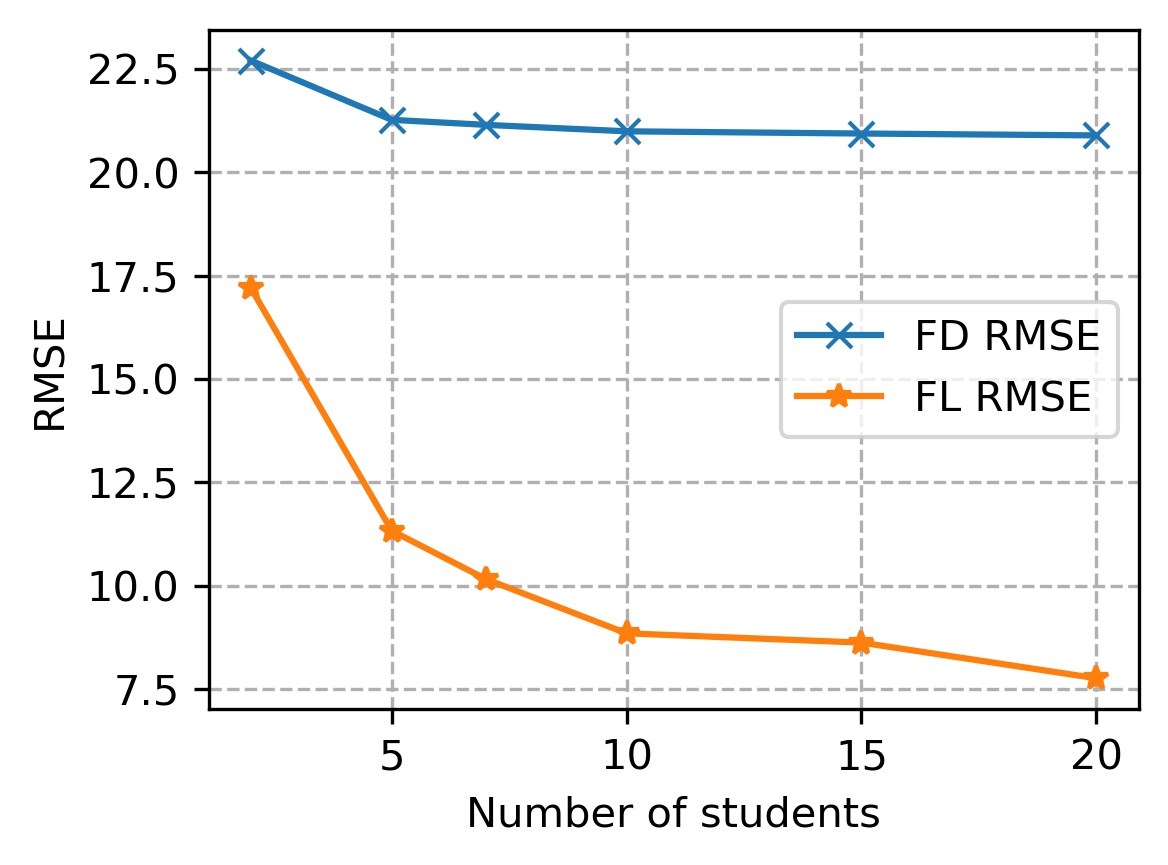}
        \caption{Estimation error}
        \label{fig:rmse}
    \end{subfigure}%
    \hfill
    \begin{subfigure}[b]{0.33\textwidth}
        \centering
        \includegraphics[width=\textwidth,height=0.2\textheight]{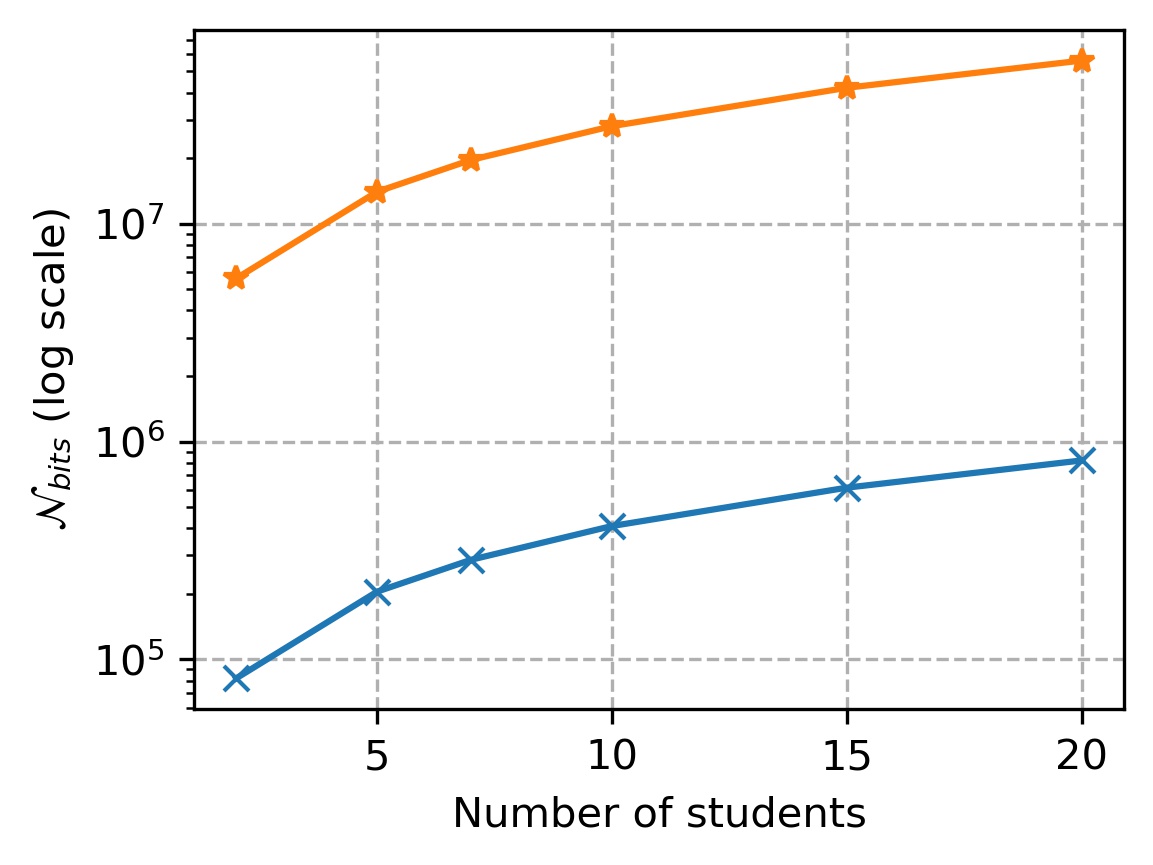}
        \caption{Transmitted bits per round }
        \label{fig:bits}
    \end{subfigure}
    \hfill
    \begin{subfigure}[b]{0.33\textwidth}
        \centering
        \includegraphics[width=\textwidth,height=0.2\textheight]{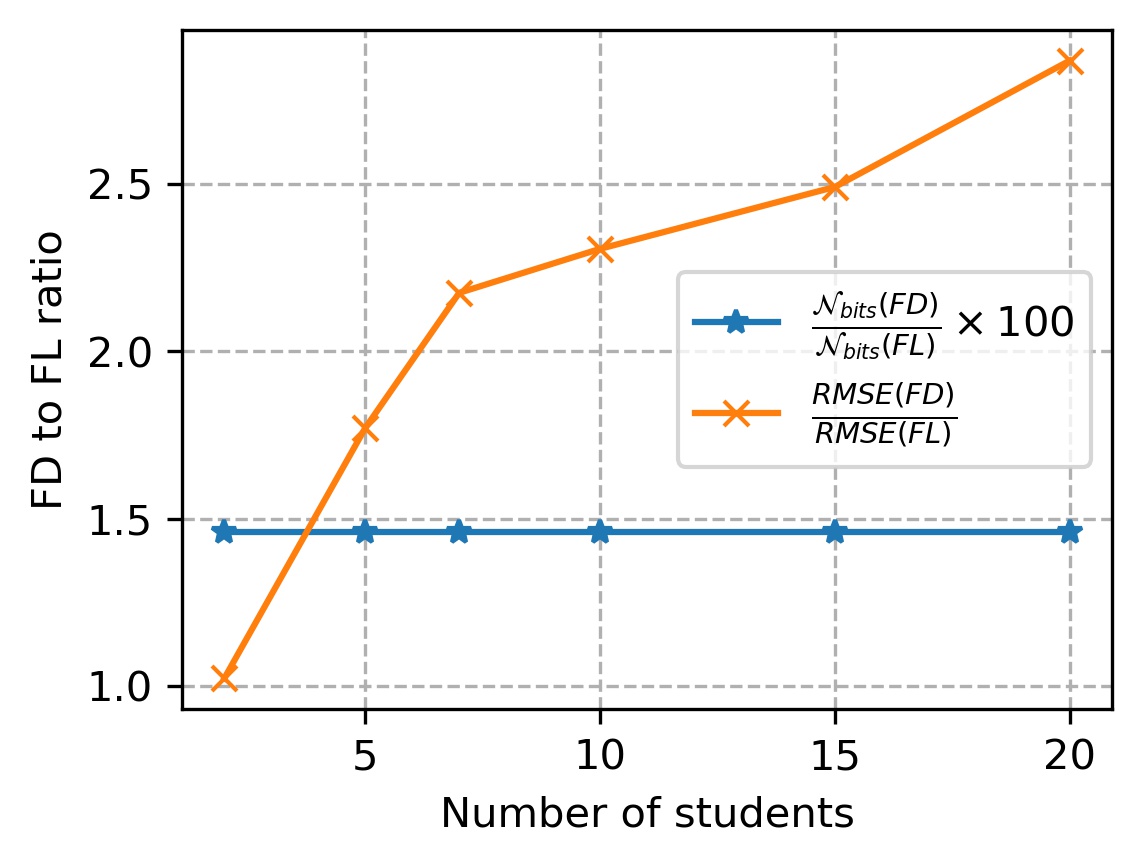}
        \caption{FD to FL metrics ratio}
        \label{fig:ratio}
    \end{subfigure}
    \caption{Regression based FD scalabity analysis, UJIIndoorLoc dataset\cite{ujiindoorloc}}
    \label{fig:scalability}
\end{figure*}

\begin{figure*}[!ht]
    \centering
     
    \begin{subfigure}[b]{0.33\textwidth}
        \centering
        \includegraphics[width=\textwidth,height=0.2\textheight]{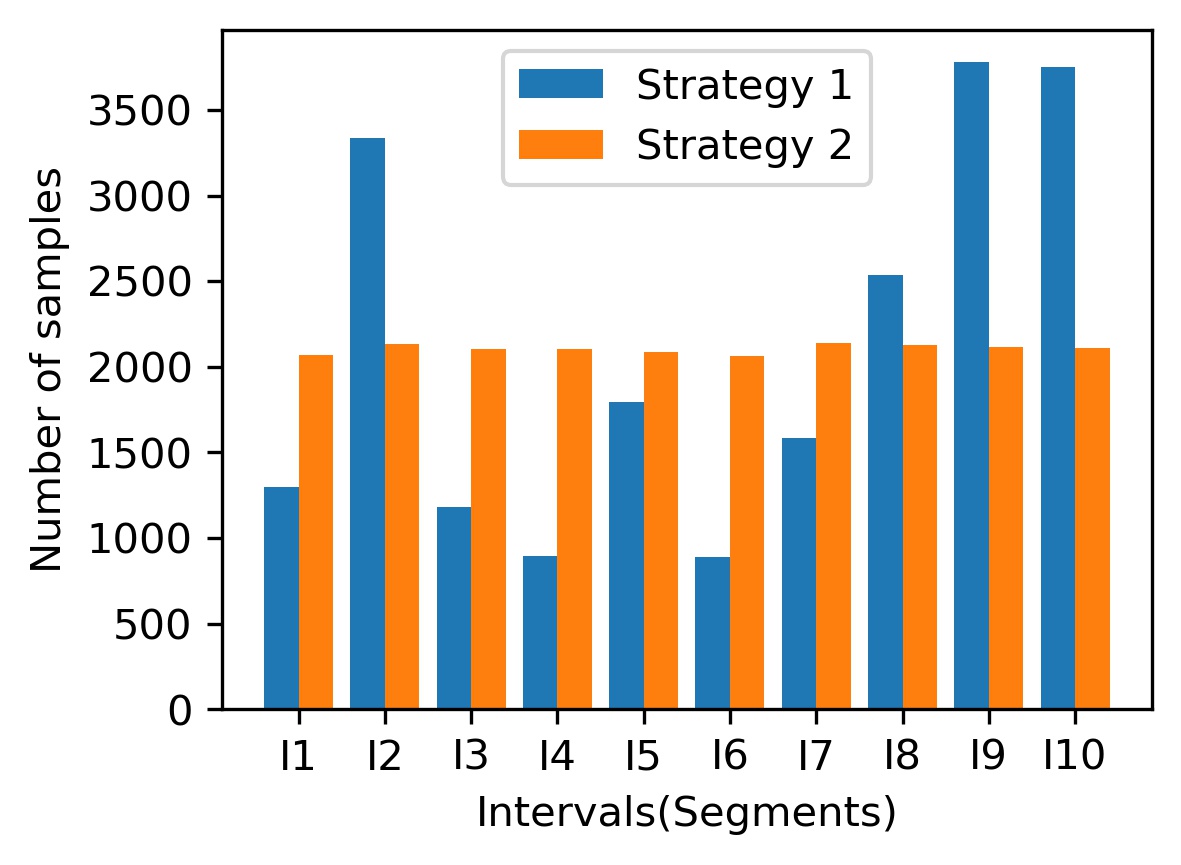}
        \caption{Segmentation strategies}
        \label{fig:strategies}
    \end{subfigure}%
    \hfill
    \begin{subfigure}[b]{0.33\textwidth}
        \centering
        \includegraphics[width=\textwidth,height=0.2\textheight]{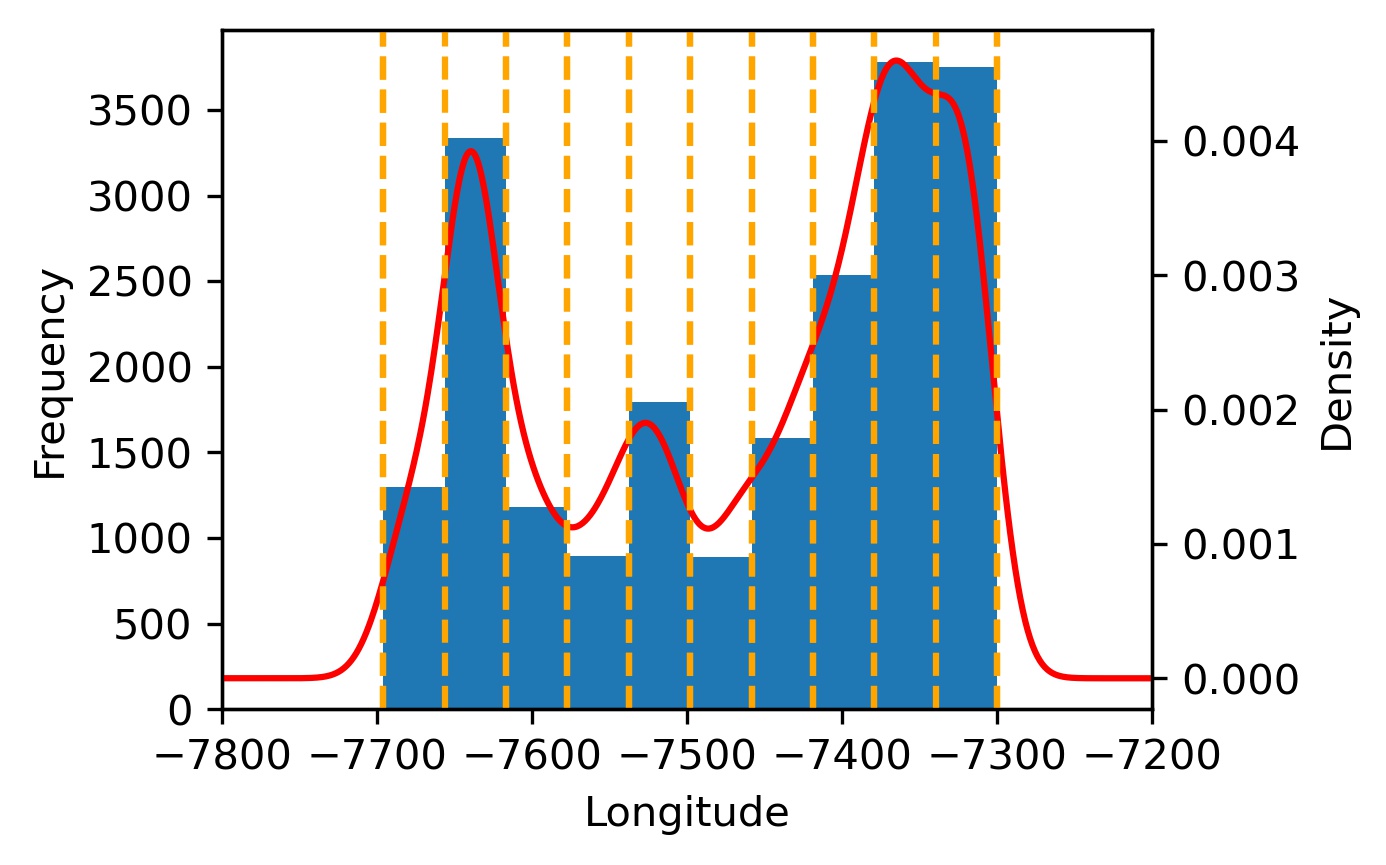}
        \caption{Strategy 1}
        \label{fig:strategy1}
    \end{subfigure}
    \hfill
    \begin{subfigure}[b]{0.33\textwidth}
        \centering
        \includegraphics[width=\textwidth,height=0.2\textheight]{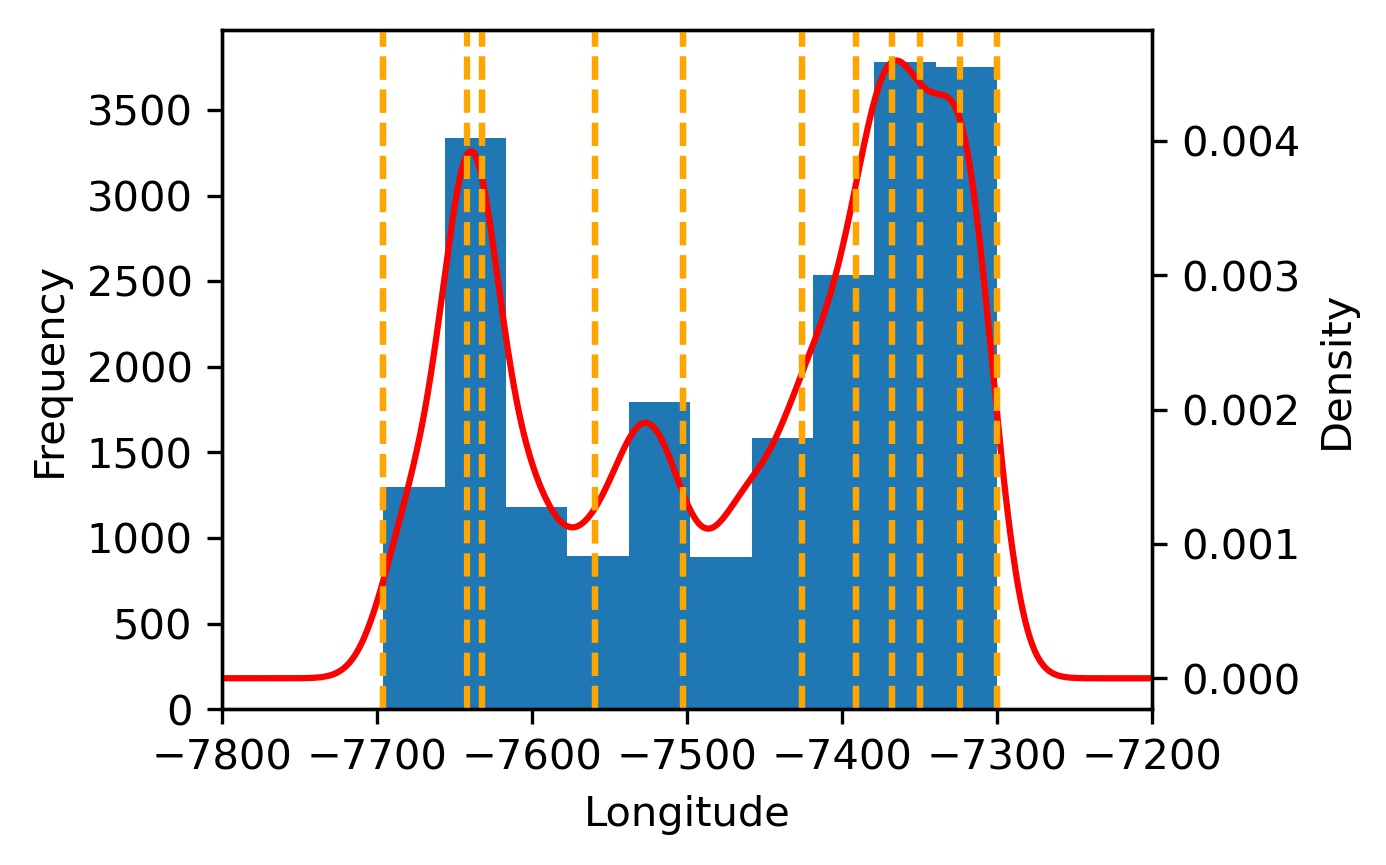}
        \caption{Strategy 2}
        \label{fig:strategy2}
    \end{subfigure}
    
    \caption{Impact of segmentation on data distribution, UJIIndoorLoc dataset\cite{ujiindoorloc}}
    \label{fig:seg-distribution}
\end{figure*}


\begin{figure}[!t]
    \centering
    \includegraphics[scale=0.8]{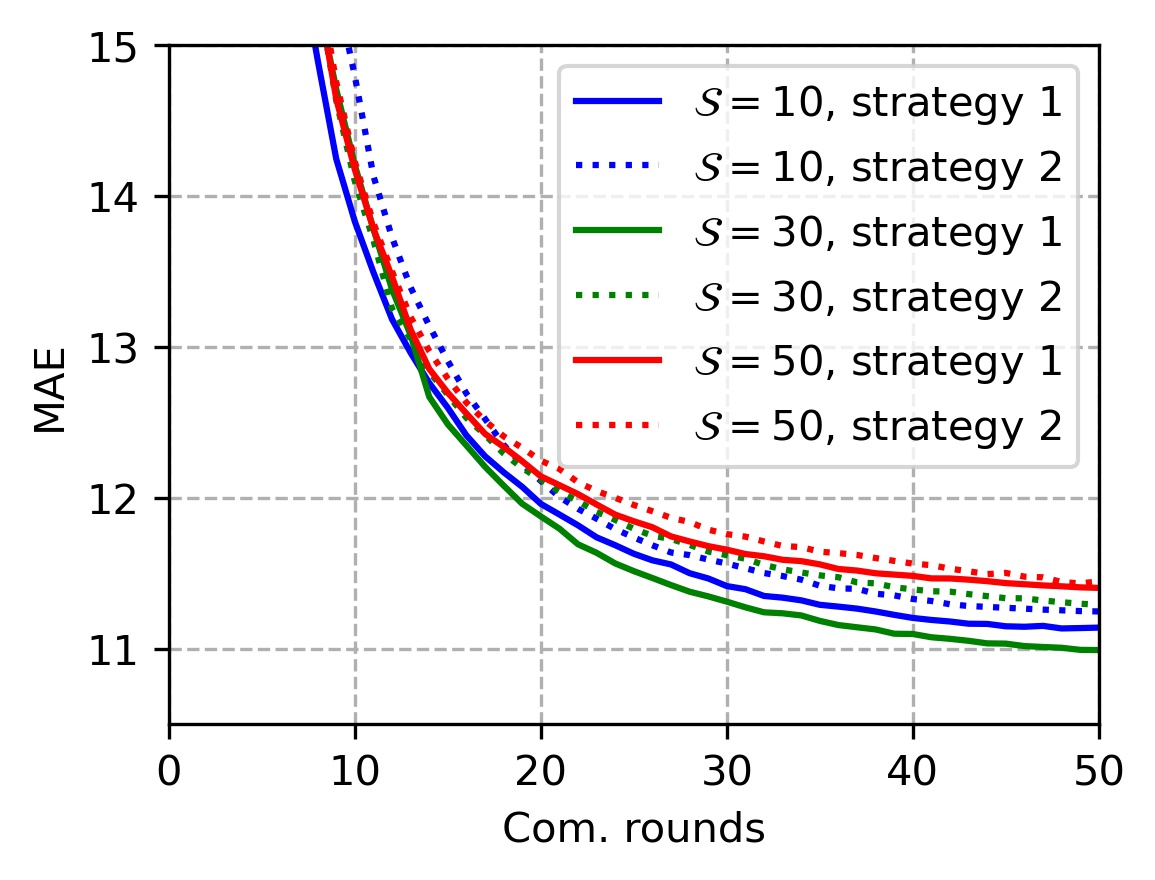}
    \caption{Impact of segmentation on system accuracy}
    \label{fig:seg-accuracy}
\end{figure}

\begin{figure}[!t]
    \centering
    \includegraphics[scale=0.8]{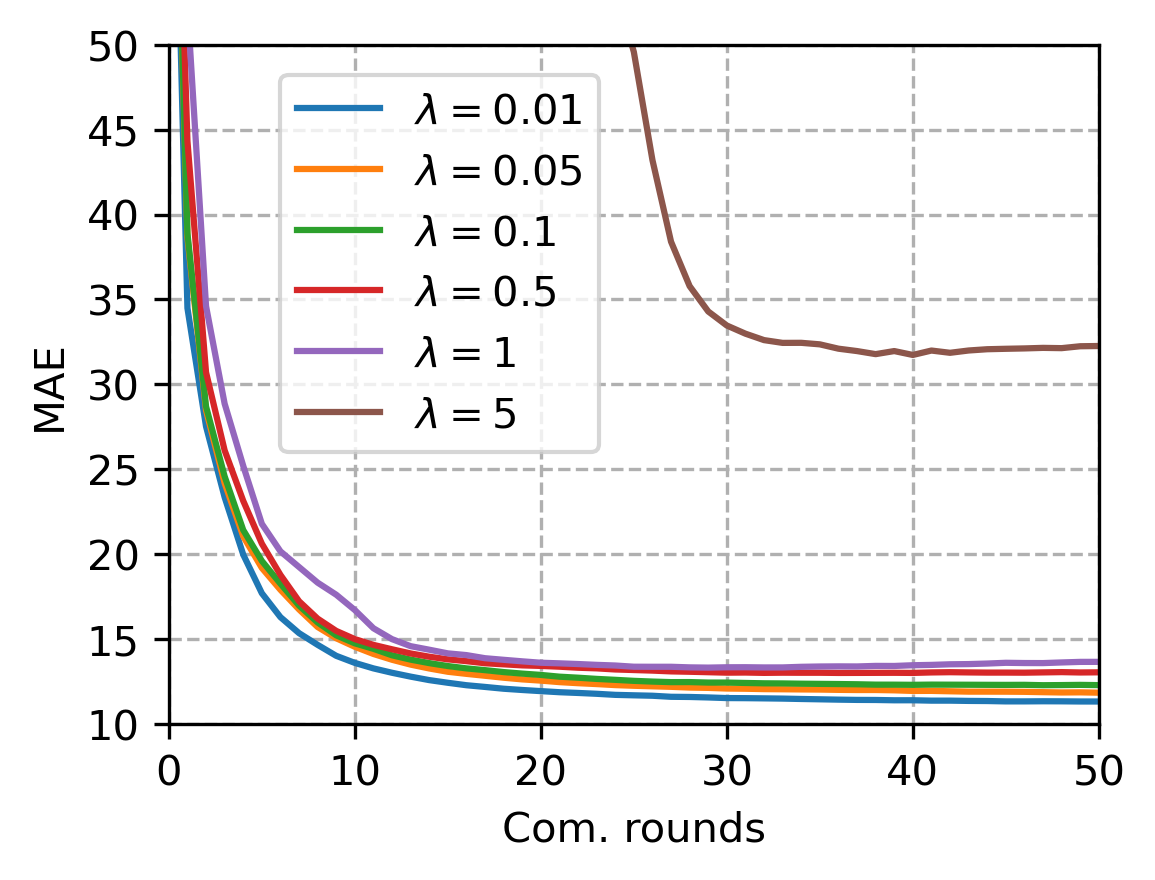}
    \caption{Impact of the distillation regularizer on the FD system accuracy}
    \label{fig:lambda}
\end{figure}

\subsection{The impact of segmentation and data distribution}
\label{sec:segm}
In this part, we analyse the impact of the segmentation on the system performance. Indeed, we consider two segmentation strategies as shown in Fig.~\ref{fig:strategies}.
In the first strategy in Fig. \ref{fig:strategy1}, a \textit{uniform split} which is so far used in this work and was explained in Fig. \ref{fig:seg},  we divide the output target into equal size segments referred to as intervals. So depending on the data distribution, some intervals may be empty or at least very less populated because the target value is far from being uniformly distributed. Consequently, the private datasets of clients will only contain a subset of segments playing the role of labels here. This results in missing label problem which needs to be properly mitigated depending on the dataset.

Alternatively, we design a second segmentation strategy in Fig. \ref{fig:strategy2}, a \textit{density-based split} in which each segment/interval possesses the same number of samples, meaning that their sizes are necessarily different. This will consequently affect the accuracy of the models as shown in Fig. \ref{fig:seg-accuracy}, so that at the end, the choice of a strategy relies on the data distribution.
Moreover, intuitively, increasing the number of segments should improve the models accuracy but actually as shown in Fig. \ref{fig:seg-accuracy}, it can in contrast degrades the model performance if the distribution impairments induced by the segmentation is not properly mitigated. 

     
    

\subsection{Impact of the distillation regularizer}
The FD students learn from their teachers by distilling the teachers' knowledge weighted by a regularizer. This regurlaizer is an hyperparameter of the FD system that plays an import role in how much the students learn from their teachers. Consequently, the choice of the regurlarizer impacts directly the students models accuracy. In Fig~\ref{fig:lambda}, we show the effect of the regularizer $\lambda$ in the FD based indoor localization with the UJIIndoorLoc dataset. We can observe that for this application, the smaller the regularizer the better the accuracy. Thus, a good choice of $\lambda$ for this application has to meet the condition $\lambda<1$.
This result is in accordance with the theoretical analysis conducted in Section~\ref{sec:conv}, where it has been shown that the student prediction error increase with the value of $\lambda$ in the presence of imperfect teacher model.


\section{Conclusion}
\label{sec:conclusion}
In this work, we have presented a novel federated distillation framework meant for regression problems and leverage it to design an indoor IoT localization system. Our solution has shown remarkable results in terms of bandwidth and energy saving compared to \ac{FL} based localization systems. 
The strength of our framework relies in its novelty and its communication efficiency since it
can reduce the communication load by $\sim 98\%$ compared to \ac{FL} framework. In addition to its communication efficiency, our proposed framework presents  a great scalability capability making it suitable for large applications in \ac{IoT} networks. However, despite its great performance, this framework can be further improved by tackling its limitations such as the impairments induced by the segmentation
and devices heterogeneity in terms of computation and communication capabilities.


\bibliographystyle{IEEEtran}
\bibliography{references}




\end{document}